\newtheorem{theorem}{Theorem}
\newtheorem{lemma}[theorem]{Lemma}
\long\def\symbolfootnote[#1]#2{\begingroup
\def\thefootnote{\fnsymbol{footnote}}\footnote[#1]{#2}\endgroup}
\title{Binary MDS Array Codes with Optimal Repair}
\author{Hanxu Hou and Patrick P. C. Lee}
\begin{document}

\maketitle
\vspace{-0.5cm}
\begin{abstract}\symbolfootnote[0]{Hanxu Hou is with the School of Electrical Engineering \& Intelligentization, Dongguan University of Technology and the Department of Computer Science and Engineering, The Chinese University of Hong Kong (E-mail: houhanxu@163.com). Patrick P. C. Lee is with the Department of Computer Science and Engineering, The Chinese University of Hong Kong (E-mail: pclee@cse.cuhk.edu.hk). 
This work was partially supported by the
National Natural Science Foundation of China (No. 61701115) and Research Grants Council of Hong Kong (GRF 14216316 and CRF C7036-15G).}
Consider a binary maximum distance separable (MDS) array code composed of an
$m\times (k+r)$ array of bits with $k$ information columns and $r$ parity
columns, such that any $k$ out of $k+r$ columns suffice to reconstruct the $k$
information columns.  Our goal is to provide {\em optimal repair access} for
binary MDS array codes, meaning that the bandwidth triggered to repair any
single failed information or parity column is minimized.  In this paper, we
propose a generic transformation framework for binary MDS array codes, using
EVENODD codes as a motivating example, to support optimal repair access for
$k+1\le d \le k+r-1$, where $d$ denotes the number of non-failed columns that
are connected for repair; note that when $d<k+r-1$, some of
the chosen $d$ columns in repairing a failed column are specific.
In addition, we show how our transformation framework applies
to an example of binary MDS array codes with asymptotically optimal repair
access of any single information column and enables asymptotically or exactly
optimal repair access for any column.
Furthermore, we present a new transformation for EVENODD codes
with two parity columns such that the existing efficient repair property of
any information column is preserved and the repair access of parity column is
optimal.
\end{abstract}

\begin{IEEEkeywords}
Binary MDS array codes, EVENODD codes, repair bandwidth, repair access.
\end{IEEEkeywords}

\IEEEpeerreviewmaketitle

\section{Introduction}
\label{sec:intro}

Large-scale storage systems typically introduce redundancy into data storage
to provide fault tolerance and maintain storage reliability.  Erasure coding
is a redundancy technique that significantly achieves higher reliability than
replication at the same storage overhead \cite{weatherspoon2002}, and has been
widely adopted in commercial storage systems
\cite{huang2012,sathiamoorthy2013}.  One important class of erasure codes is
{\em maximum distance separable (MDS)} codes, which achieve the maximum
reliability for a given amount of redundancy.  Specifically, an MDS code
transforms $k$ information symbols into $k+r$ encoded symbols of the same size
for some configurable parameters $k$ and $r$, such that any $k$ out of
$k+r$ symbols are sufficient to retrieve all $k$ information symbols.
Reed-Solomon (RS) codes \cite{reed1960} are one well-known example of MDS codes.

In this paper, we examine a special class of MDS codes called {\em binary MDS
array codes}, which have low computational complexity since the encoding and
decoding procedures only involve XOR operations.  Examples of binary MDS array
codes are EVENODD \cite{blaum1995,blaum1996,blaum2001}, X-code \cite{xu1999},
and RDP \cite{corbett2004,blaum2006}.  Specifically, we consider a binary MDS
array code that is composed of an array of size $m\times (k+r)$, where each
element in the array is a bit.  In this work, we assume that the code is
{\em systematic}, meaning that $k$ columns are {\em information columns} that
store information bits, and the remaining $r$ columns are {\em parity columns}
that store parity bits encoded from the $k$ information columns.  The code is
MDS, meaning that any $k$ out of $k+r$ columns can reconstruct all the
original $k$ information columns.  We distribute the $k+r$ columns across
$k+r$ distinct storage nodes, such that the bits in each column are stored in
the same node. We use the terms ``column'' and ``node'' interchangeably in
this paper.

In large-scale storage systems, node failures are common and the majority of
all failures are single node failures \cite{ford2010}. Thus, it is critical to
design an efficient repair scheme for repairing the lost bits of a single failed
node, while providing fault tolerance for multiple node failures.  The problem
of repairing a single node failure was first formulated by Dimakis \emph{et
al.} \cite{dimakis2010}, in which it is shown that the amount of symbols
downloaded for repairing a single node failure (called the \emph{repair
bandwidth}) of an $m\times (k+r)$ MDS array code is at least (in units of
bits):
\begin{equation}
\frac{dm}{d-k+1},
\label{eq:rep}
\end{equation}
where $d$ ($k\leq d\leq k+r-1$) is the number of nodes connected to repairing
the failed node.  Many constructions
\cite{rashmi2011,suh2011,hou2016a,li2017,ye2017} of MDS array codes
have been proposed to achieve the optimal repair bandwidth in \eqref{eq:rep}.
If the repair bandwidth of a binary MDS array code achieves the optimal value
in \eqref{eq:rep}, we say that the code has optimal repair bandwidth. If the
repair does not require any arithmetic operations on the $d$ connected nodes,
then the repair is called \emph{uncoded repair}. A binary MDS array code is
said to achieve \emph{optimal repair access} if the repair bandwidth is
\eqref{eq:rep} with uncoded repair.


\subsection{Related Work}

There are many related studies on binary MDS array codes along different
directions, such as new constructions
\cite{blaum1996,blaum2001,atul2012,hou2014,hou2018}, efficient decoding
methods \cite{huang2008,Jiang2013,wang2012,huang2016,hou2018d,hou2018a} and
the improvement of the repair problem
\cite{wang2010,xiang2010,xiang2011,zhu2014,hou2017,hou2018b,hou2018c}.

In particular, EVENODD is well explored in the literature, and has been
extended to STAR codes \cite{huang2008} with three parity columns and
generalized EVENODD \cite{blaum1996,blaum2001} with more parity columns.  The
computational complexity of EVENODD is optimized in \cite{hou2018} by a new
construction. A sufficient condition for the generalized EVENODD to be
MDS with more than eight parity columns is given in \cite{hou2016}.

RDP is another important class of binary MDS array codes with two parity
columns. It is extended to RTP codes \cite{atul2012} to tolerate three column
failures.  Blaum \cite{blaum2006} generalized RDP to correct more than
three column failures and showed that the generalized EVENODD and
generalized RDP share the same MDS property condition. The authors in
\cite{hou2018d} proposed a unified form of generalized EVENODD and
generalized RDP, and presented an efficient decoding method for some
patterns of failures.

The above constructions are based on the Vandermonde matrix. Some
constructions of binary MDS array codes based on Cauchy matrix are Cauchy
Reed-Solomon codes \cite{blomer1999}, Rabin-like codes
\cite{feng2005,hou2018a} and circulant Cauchy codes \cite{schindelhauer2013}.

Most of the decoding methods focus on generalized EVENODD
\cite{Jiang2013,huang2008} and generalized RDP \cite{atul2012,huang2016} with
three parity columns. The study \cite{hou2018d} shows an efficient
erasure decoding method based on the LU factorization of
Vandermonde matrix for EVENODD and RDP with more than two parity columns.

There have been many studies
\cite{xiang2010,xu2014,wang2010,wang2013,wang2016,gad2013,pamies2016,hou2017,hou2018b,hou2018c}
on the repair problem of binary MDS array codes.  Some optimal repair schemes
reduce I/O for RDP \cite{xiang2010}, X-code \cite{xu2014} and EVENODD
\cite{wang2010} by approximately 25\%, but the repair bandwidth is
sub-optimal. ButterFly codes \cite{gad2013,pamies2016} are binary MDS array
codes with optimal repair for information column failures, but only has two
parity columns (i.e., $r\!=\!2$). MDR codes \cite{wang2013,wang2016} are
constructed with $r=2$ and have optimal repair bandwidth for $k$ information
columns and one parity column.  Binary MDS array codes with more than two
parity columns are proposed in \cite{hou2017,hou2018c,hou2018b}; however, the
repair bandwidth is asymptotically optimal and the $d$ helper columns are
specifically selected.

\subsection{Contributions}

The contributions of this paper are summarized as follows.
\begin{enumerate}
\item
First, we propose a generic transformation for an $m\times (k+r)$ EVENODD
code.  The transformed EVENODD code is of size $m(d-k+1)\times (k+r)$ and has
three properties: (1) the transformed EVENODD code achieves optimal repair
access for the chosen $d-k+1$ columns; (2) the property of
optimal repair access for the chosen $d-k+1$ columns of the
transformed EVENODD code is preserved if we apply the transformation once more
for the transformed EVENODD code; and (3) the transformed EVENODD code is MDS.
\item
Second, we present a family of $m(d-k+1)^{\lceil \frac{k}{d-k+1}\rceil+\lceil
\frac{r}{d-k+1}\rceil}\times (k+r)$ multi-layer transformed EVENODD codes with
$r\geq 2$, such that it achieves optimal repair access for all columns based
on the EVENODD transformation, where $k+1\leq d\leq k+r-1$. Some of the $d$
helper columns need to be specifically selected.
\item
Third, the efficient decoding method of the original EVENODD code is also
applicable to the proposed family of multi-layer transformed EVENODD codes.
\item
Lastly, the other binary MDS array codes, such as RDP \cite{blaum2006} and
codes in
\cite{blomer1999,feng2005,hou2018a,schindelhauer2013,hou2017,hou2018c,hou2018b},
can also be transformed to achieve optimal repair access and the efficient
decoding methods of the original binary MDS array codes are maintained in the
transformed codes.
By applying the transformation with well-chosen encoding
coefficients for an example of binary MDS array codes \cite{hou2018c} that have
asymptotically optimal repair access for any information column, we show that
the obtained transformed codes have asymptotically optimal repair access for
any information column and optimal repair access for any parity column.  We
also show how to design a transformation for EVENODD codes with two parity
columns such that the transformed codes have optimal repair access for any
single parity column and the repair access of any single information column of
the transformed codes is roughly $3/4$ of all the information bits.
\end{enumerate}

A closely related work to ours is \cite{li2017}, which also
proposes a transformation for non-binary MDS codes to enable optimal repair
access.  The main differences between the work in \cite{li2017} and ours are
two-fold.  First, our transformation is designed for binary MDS array codes,
while the transformation in \cite{li2017} is designed for non-binary MDS
codes. The minimum operation unit of our transformed codes is a bit, so that
we can carefully choose the encoding coefficients of the transformation to
combine the efficient repair property of existing or newly designed binary MDS
array codes for any single information column as well as the optimal repair of
the transformed codes for any parity column. In contrast, the minimum
operation unit of the transformation \cite{li2017} for non-binary MDS codes is
a field element, so we cannot directly apply the transformation \cite{li2017}
for binary MDS array codes. Even though we can view each column of some binary
MDS array codes (such as EVENODD codes with $p$ being a special prime number
\cite{blaum1996}) as a field element, if we apply the transformation
\cite{li2017} for such binary MDS array codes, the efficient repair property
of such binary MDS array codes for any single information column cannot be
maintained, as the efficient repair property of binary MDS array codes is
achieved by downloading some bits from the chosen columns but not all the
bits (field element) from the chosen columns. We illustrate the transformation
of an example of binary MDS array codes \cite{hou2018c} with asymptotically
optimal repair access for any single information column to obtain the
transformed array codes that have asymptotically optimal repair for any
information column and optimal repair access for any parity column in
Section~\ref{sec:trans-exm}. We also design a new transformation for EVENODD
codes with $r=2$ parity columns such that the repair access of any single
information column of the transformed codes is roughly $3/4$ of all the
information bits and the repair access of each parity column is optimal in
Section \ref{sec:trans-evenodd}. Second, our work allows a more flexible
number of nodes connected for repairing the failed node. In particular, our
work allows $k+1\leq d\leq k+r-1$, while the work in \cite{li2017} requires
$d=k+r-1$.

\section{Transformation of EVENODD Codes}
\label{sec:framework}

We first review the definition of EVENODD codes. We then present our
transformation approach.

\subsection{Review of EVENODD Codes}

An EVENODD code is an array code of size $(p-1)\times (k+r)$, where $p$ is a
prime number with $p\geq \max\{k,r\}$. Given the $(p-1)\times (k+r)$ array
$[a_{i,j}]$ for $i=0,1,\ldots,p-2$ and $j=0,1,\ldots,k+r-1$, the $p-1$ bits
$a_{0,j},a_{1,j},\ldots,a_{p-2,j}$ in column $j$ can be represented as a
polynomial
\[
a_{j}(x)=a_{0,j}+a_{1,j}x+\cdots+a_{p-2,j}x^{p-2}.
\]
Without loss of generality, we store the information bits in the $k$ leftmost
columns and the parity bits in the remaining $r$ columns.  The first $k$
polynomials $a_{0}(x),\ldots,a_{k-1}(x)$ are called \emph{information
polynomials}, and the last $r$ polynomials $a_{k},\ldots,a_{k+r-1}(x)$ are
\emph{parity polynomials}. The $r$ parity polynomials are computed as
\begin{equation}
\begin{array}{ll}
\begin{bmatrix}
a_{k}(x)& \cdots & a_{k+r-1}(x)\end{bmatrix}
=\begin{bmatrix}a_{0}(x)& \cdots & a_{k-1}(x)\end{bmatrix}
\begin{bmatrix}
1& 1& \cdots &1\\
1& x& \cdots &x^{r-1}\\
\vdots & \vdots & \ddots & \vdots \\
1& x^{k-1} & \cdots & x^{(r-1)(k-1)}
\end{bmatrix}
\end{array}
\label{eq:evenodd-en}
\end{equation}
over the ring $\mathbb{F}_2[x]/(1+x+\cdots+x^{p-1})$. The
matrix on the right-hand side of \eqref{eq:evenodd-en} is called the
\emph{encoding matrix}.

\subsection{The Transformation}
\label{sec:trans}
We will present the transformation that can convert a $(p-1)\times (k+r)$
EVENODD code into a $(p-1)(d-k+1)\times (k+r)$ transformed code with
optimal repair access for any chosen $d-k+1$ columns, where $k+1\leq d\leq
k+r-1$. For the ease of presentation, we assume that the chosen $d-k+1$ columns
are the first $d-k+1$ columns in the following discussion.

\subsubsection{The First Transformation}
\label{sec:first-trans}
Given the codewords of a $(p-1)\times (k+r)$ EVENODD code
$a_{0}(x),\ldots,a_{k+r-1}(x)$, we first generate $d-k+1$ instances
$a_{0,\ell}(x),\ldots,a_{k+r-1,\ell}(x)$ for $\ell=0,1,\ldots,d-k$.
Specifically, the $r$ parity polynomials $a_{k,\ell}(x),\ldots,a_{k+r-1,\ell}(x)$ are
computed by the multiplication of $[a_{0,\ell}(x),\ldots,a_{k-1,\ell}(x)]$ and the
encoding matrix in \eqref{eq:evenodd-en}, where $\ell=0,1,\ldots,d-k$. For
$i=0,1,\ldots,d-k$, the polynomials in column $i$ are
\begin{equation}
\begin{array}{ll}
a_{i,0}(x)+a_{0,i}(x),a_{i,1}(x)+a_{1,i}(x),\ldots,a_{i,i-1}(x)+a_{i-1,i}(x),& \\
a_{i,i}(x),a_{i,i+1}(x)+(1+x^e)a_{i+1,i}(x),& \\
a_{i,i+2}(x)+(1+x^e)a_{i+2,i}(x),\ldots,a_{i,d-k}(x)+(1+x^e)a_{d-k,i}(x),&
\end{array}
\label{eq:trans1}
\end{equation}
where $e$ is a positive integer with $1\leq e\leq p-1$.
On the other hand, for $i=d-k+1,\ldots,k+r-1$, the polynomials in column $i$
are
\[
a_{i,0}(x),a_{i,1}(x),\ldots,a_{i,d-k}(x).
\]
The above transformation is called \emph{the first transformation} and the obtained codes are called transformed EVENODD codes. Each column of the transformed EVENODD codes has $d-k+1$ polynomials. Table
\ref{table:A1} shows an example of the first transformed EVENODD code with
$k=4$, $r=2$, $d=5$ and $e=1$.

\begin{table*}[!t]
\caption{The first transformation for EVENODD codes with
$k=4$, $r=2$, $d=5$ and $e=1$.}
\vspace{-8pt}
\begin{center}
\begin{tabular}{|c|c|c|c|c|c|}
\hline
Column 0 & Column 1  & Column 2  & Column 3 & Column 4  & Column 5 \\
\hline
$a_{0,0}(x)$& $a_{1,0}(x)+$ & $a_{2,0}(x)$& $a_{3,0}(x)$ & $a_{4,0}(x)=a_{0,0}(x)+a_{1,0}(x)+$ & $a_{5,0}(x)=a_{0,0}(x)+xa_{1,0}(x)+$ \\
 &$a_{0,1}(x)$ & & & $a_{2,0}(x)+a_{3,0}(x)$& $x^2a_{2,0}(x)+x^3a_{3,0}(x)$ \\
\hline
$a_{0,1}(x)+$& $a_{1,1}(x)$ &$a_{2,1}(x)$& $a_{3,1}(x)$ & $a_{4,1}(x)=a_{0,1}(x)+a_{1,1}(x)+$ & $a_{5,1}(x)=a_{0,1}(x)+xa_{1,1}(x)+$\\
$(1+x)a_{1,0}(x)$ & & & & $a_{2,1}(x)+a_{3,1}(x)$& $x^2a_{2,1}(x)+x^3a_{3,1}(x)$ \\
\hline
\end{tabular}
\end{center}
\label{table:A1}
\end{table*}

\textbf{Remark}. For $i< j\in \{0,1,\ldots,d-k\}$, columns $i$ and $j$ contain the following two polynomials
\begin{align*}
a_{i,j}(x)+(1+x^e)a_{j,i}(x), a_{j,i}(x)+a_{i,j}(x).
\end{align*}
We can solve $x^ea_{j,i}(x)$ by summing the above two polynomials. Then, we can
obtain $a_{j,i}(x)$ by multiplying $x^ea_{j,i}(x)$ by $x^{p-e}$, and
$a_{i,j}(x)$ by summing $a_{j,i}(x)+a_{i,j}(x)$ and $a_{j,i}(x)$. Therefore,
we can solve two information polynomials $a_{j,i}(x),a_{i,j}(x)$ from columns
$i$ and $j$. If $\ell$
columns are chosen that are in the first $d-k+1$ columns, then we can solve
$\ell(\ell-1)$ information polynomials from the chosen $\ell$ columns, where
$\ell=2,3,\ldots,d-k+1$.

\subsubsection{The Second Transformation}

Note that the above transformed code is a non-systematic code. To obtain the
systematic code, we can first replace $a_{i,\ell}(x)+a_{\ell,i}(x)$ by
$a'_{i,\ell}(x)$ and replace $a_{\ell,i}(x)+(1+x^e)a_{i,\ell}(x)$ by $a'_{\ell,i}(x)$
for $\ell<i$, to obtain that
\begin{equation}
\left\{
\begin{array}{ll}
a_{i,\ell}(x)=x^{p-e}a'_{i,\ell}(x)+x^{p-e}a'_{\ell,i}(x), \\
a_{\ell,i}(x)=(1+x^{p-e})a'_{i,\ell}(x)+x^{p-e}a'_{\ell,i}(x).
\end{array}
\right.
\label{eq:substi}
\end{equation}
Then, we can show the equivalent systematic transformed code as follows. For $i=0,1,\ldots,k-1$, the $d-k+1$ polynomials in column $i$ are $a_{i,\ell}(x)$ for $\ell=0,1,\ldots,d-k$. Recall that the polynomial $a_{i,\ell}(x)$ is computed by
\[
a_{i,\ell}(x)=\sum_{j=0}^{k-1}x^{j(i-k)}a_{j,\ell}(x)
\]
for $i=k,k+1,\ldots,k+r-1$ and $\ell=0,1,\ldots,d-k$. We update $r(d-k+1)$ polynomials $a_{i,\ell}(x)$ for $i=k,k+1,\ldots,k+r-1$ and $\ell=0,1,\ldots,d-k$, by replacing the component $x^{j(i-k)}a_{j,\ell}(x)$ of $a_{i,\ell}(x)$ by $x^{j(i-k)}(x^{p-e}a_{j,\ell}(x)+(1+x^{p-e})a_{\ell,j}(x))$ for $j<\ell$, and replacing the component $x^{j(i-k)}a_{j,\ell}(x)$ of $a_{i,\ell}(x)$ by $x^{j(i-k)}(x^{p-e}a_{j,\ell}(x)+x^{p-e}a_{\ell,j}(x))$ for $j> \ell$, i.e.,
\begin{align*}
&a_{i,\ell}(x)=\left(\sum_{j=0}^{\ell-1}x^{j(i-k)}(x^{p-e}a_{j,\ell}(x)+(1+x^{p-e})a_{\ell,j}(x)) \right)+\\
&x^{\ell(i-k)}a_{\ell,\ell}(x)+\left(\sum_{j=\ell+1}^{d-k}x^{j(i-k)}(x^{p-e}a_{j,\ell}(x)+x^{p-e}a_{\ell,j}(x))\right)\\
&+\left(\sum_{j=d-k+1}^{k-1}x^{j(i-k)}a_{j,\ell}(x)\right).
\end{align*}
The $d-k+1$ polynomials in column $i$ for $i=k,k+1,\ldots,k+r-1$ are
$a_{i,\ell}(x)$ for $\ell=0,1,\ldots,d-k$. In other words, the $d-k+1$
polynomials in column $i$ for $i=k,k+1,\ldots,k+r-1$ are
$a_{i,0}(x),\ldots,a_{i,d-k}(x)$, which are computed by \eqref{eq:trans} over the ring $\mathbb{F}_2[x]/(1+x+\cdots+x^{p-1})$.
\newcounter{mytempeqncnt}
\begin{figure*}[!t]
\begin{equation}
\small
\begin{array}{ll}
\begin{bmatrix}
1 & x^{i-k} & \cdots & x^{(k-1)(i-k)}
\end{bmatrix} \cdot \\
\begin{bmatrix}
a_{0,0}(x) & x^{p-e}a_{0,1}(x)+(1+x^{p-e})a_{1,0}(x) & \cdots & x^{p-e}a_{0,d-k}(x)+(1+x^{p-e})a_{d-k,0}(x) \\
x^{p-e}(a_{1,0}(x)+a_{0,1}(x)) & a_{1,1}(x) & \cdots & x^{p-e}a_{1,d-k}(x)+(1+x^{p-e})a_{d-k,1}(x)\\
\vdots & \vdots & \ddots & \vdots \\
x^{p-e}(a_{d-k,0}(x)+a_{0,d-k}(x)) & x^{p-e}(a_{d-k,1}(x)+a_{1,d-k}(x)) & \cdots & a_{d-k,d-k}(x) \\
a_{d-k+1,0}(x) & a_{d-k+1,1}(x) & \cdots & a_{d-k+1,d-k}(x)\\
\vdots & \vdots & \ddots & \vdots \\
a_{k-1,0}(x) & a_{k-1,1}(x) & \cdots & a_{k-1,d-k}(x)\\
\end{bmatrix}.
\end{array}
\label{eq:trans}
\end{equation}
\end{figure*}
The above transformation is called \emph{the second transformation}. The transformed EVENODD code is denoted by $\mathsf{EVENODD}_1$.
Note that each column of EVENODD codes has one polynomial, and each column of $\mathsf{EVENODD}_1$ obtained by applying the transformation for EVENODD codes has $d-k+1$ polynomials.

When $k=4$, $r=2$, $d=5$ and $e=1$, the $\mathsf{EVENODD}_1$ code with the second transformation is shown in Table \ref{table:A}.
We claim that we can recover all the information polynomials from any four columns. We can obtain the information polynomials from columns 0, 1, 2 and 3 directly. Consider that we want to recover the information polynomials from one parity column and three information columns, say columns 0, 2, 3 and 4. We can obtain $a_{1,0}(x)$ by
\[
x(a_{4,0}(x)+a_{0,0}(x)+x^{p-1}a_{0,1}(x)+a_{2,0}(x)+a_{3,0}(x)),
\]
and $a_{1,1}(x)$ by
\[
a_{4,1}(x)+x^{p-1}a_{0,1}(x)+(1+x^{p-1})a_{1,0}(x)+a_{2,1}(x)+a_{3,1}(x).
\]
Suppose that we want to solve the information polynomials from two information columns and two parity columns, say columns 1, 2, 4 and 5. First, we compute the following two polynomials by subtracting $a_{1,1}(x),a_{2,1}(x)$ from $a_{4,1}(x),a_{5,1}(x)$,
\begin{align*}
p_1(x)=&a_{4,1}(x)+(1+x^{p-1})a_{1,0}(x)+a_{1,1}(x)+a_{2,1}(x)\\
=&x^{p-1}a_{0,1}(x)+a_{3,1}(x),\\
p_2(x)=&a_{5,1}(x)+(1+x^{p-1})a_{1,0}(x)+xa_{1,1}(x)+x^2a_{2,1}(x)\\
=&x^{p-1}a_{0,1}(x)+x^3a_{3,1}(x).
\end{align*}
Then, we can solve $a_{3,1}(x)$ by $\frac{p_1(x)+p_2(x)}{1+x^3},$\footnote{$1+x^3$ is invertible in $\mathbb{F}_2[x]/(1+x+\cdots+x^{p-1})$ due to the MDS property of EVENODD codes given in Proposition 2.2 in \cite{blaum1996}.} and $a_{0,1}(x)$ by $x(a_{3,1}(x)+p_1(x))$. The other two information polynomials $a_{0,0}(x),a_{3,0}(x)$ can be solved similarly.

The repair access of each of the first two columns is optimal. Suppose that the first column fails. We can first solve $a_{0,0}(x)$ and $x^{p-1}(a_{1,0}(x)+a_{0,1}(x))$ by accessing four polynomials $a_{2,0}(x),a_{3,0}(x),a_{4,0}(x),a_{5,0}(x)$ due to the MDS property of EVENODD codes, and then recover $a_{0,1}(x)$ by computing $x(x^{p-1}(a_{1,0}(x)+a_{0,1}(x))+x^{p-1}a_{1,0}(x))$. Therefore, we can recover two information polynomials by downloading five polynomials from five helper columns, and the repair bandwidth achieves the minimum value in \eqref{eq:rep}. The repair of the second column is similar.

\subsection{Properties of Transformed EVENODD Codes}
\label{sec:prop}
The next theorem shows that the second transformed EVENODD code is also MDS code.


\begin{table*}[t]
\scriptsize
\caption{The second transformation for EVENODD codes with $k=4$, $r=2$, $d=5$ and $e=1$.}
\vspace{-8pt}
\begin{center}
\begin{tabular}{|c|c|c|c|c|c|}
\hline
Column 0 & Column 1  & Column 2  & Column 3 & Column 4  & Column 5 \\
\hline
$a_{0,0}(x)$& $a_{1,0}(x)$ &$a_{2,0}(x)$& $a_{3,0}(x)$ & $a_{0,0}(x)+x^{p-1}a_{1,0}(x)+$ & $a_{0,0}(x)+a_{1,0}(x)+a_{0,1}(x)$ \\
& & & & $x^{p-1}a_{0,1}(x)+a_{2,0}(x)+a_{3,0}(x)$& $+x^2a_{2,0}(x)+x^3a_{3,0}(x)$\\
\hline
$a_{0,1}(x)$& $a_{1,1}(x)$ &$a_{2,1}(x)$& $a_{3,1}(x)$ & $x^{p-1}a_{0,1}(x)+(1+x^{p-1})a_{1,0}(x)+$ & $x^{p-1}a_{0,1}(x)+(1+x^{p-1})a_{1,0}(x)+$\\
& & & & $a_{1,1}(x)+a_{2,1}(x)+a_{3,1}(x)$& $xa_{1,1}(x)+x^2a_{2,1}(x)+x^3a_{3,1}(x)$\\
\hline
\end{tabular}
\end{center}
\label{table:A}
\end{table*}

\begin{theorem}
If the $(k+r,k)$ EVENODD code is MDS code, then the second transformed EVENODD
code is also MDS.
\label{thm:mds}
\end{theorem}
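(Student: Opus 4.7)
The plan is to show that from any chosen set $S$ of $k$ columns of the second transformed code, all $k(d-k+1)$ original information polynomials $a_{i,\ell}(x)$ can be uniquely recovered. Because the substitution \eqref{eq:substi} has determinant $x^{p-e}$, which is a unit in $\mathbb{F}_2[x]/(1+x+\cdots+x^{p-1})$, it is a bijective change of variables; hence MDS-ness of the systematic second transformed code is equivalent to unique recoverability of the original $a_{i,\ell}$'s from any $k$ columns of the first transformation \eqref{eq:trans1}. Partition $S=S_1\cup S_2\cup S_3$ with $S_1\subseteq\{0,\ldots,d-k\}$ the chosen modified information columns, $S_2\subseteq\{d-k+1,\ldots,k-1\}$ the chosen unmodified information columns, and $S_3\subseteq\{k,\ldots,k+r-1\}$ the chosen parity columns. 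Put $T=\{0,\ldots,k-1\}\setminus(S_1\cup S_2)$ and $T'=T\cap\{0,\ldots,d-k\}$.

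Each column in $S$ contributes linear relations over the ring of one of three kinds: direct reads $a_{i,\ell}$ for $i\in S_2$ and every $\ell$; from each $i\in S_1$, the diagonal read $a_{i,i}$ together with reflection relations $a_{i,\ell}+c_{i,\ell}(x)\,a_{\ell,i}=\mathrm{known}$ for $\ell\in\{0,\ldots,d-k\}\setminus\{i\}$, where $c_{i,\ell}(x)\in\{1,\,1+x^e\}$; and from each $j\in S_3$, the EVENODD parity relation $a_{j,\ell}=\sum_{i=0}^{k-1}x^{i(j-k)}a_{i,\ell}$ for every $\ell$. A counting check shows the total number of linear equations equals the total number of unknowns, and solvability rests on two facts: (a) the $2\times 2$ reflection system coupling $(a_{i,\ell},a_{\ell,i})$ whenever both $i,\ell\in\{0,\ldots,d-k\}$ has determinant $x^e$, a unit in the ring; (b) any square submatrix of the EVENODD encoding matrix with row-set $S_3$ and column-set $T$ is invertible by the MDS hypothesis.

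I would then organise the solve as a five-stage Gaussian elimination, each stage invoking only (a) or (b). (i) For each pair $\{i,\ell\}\subseteq S_1$, the two $S_1$-reflection equations determine $(a_{i,\ell},a_{\ell,i})$ via (a). (ii) For each $\ell\in S_1$, substituting the $a_{i,\ell}$ for $i\in S_1\cup S_2$ into the $|S_3|$ parity equations at instance $\ell$ yields $\sum_{i\in T}x^{i(j-k)}a_{i,\ell}=\mathrm{known}$, solvable by (b). (iii) For each $i\in S_1$ and $\ell\in T'$, the $S_1$-reflection equation has its reflected term $a_{\ell,i}$ supplied by stage (ii), so $a_{i,\ell}$ follows at once. (iv) For each $\ell\in T'$, introduce rescaled variables $\gamma[i,\ell]$ that absorb the $x^{p-e}$ factors appearing in the first transformation; after substituting all values known from stages (i)--(iii), the parity equations reduce again to an EVENODD subsystem $\sum_{i\in T}x^{i(j-k)}\gamma[i,\ell]=\mathrm{known}$, solvable by (b). (v) For each pair $\{i,\ell\}\subseteq T'$, a $2\times 2$ system with determinant $x^e$ links $(\gamma[i,\ell],\gamma[\ell,i])$ back to $(a_{i,\ell},a_{\ell,i})$.

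The main obstacle is the book-keeping that makes this elimination order valid, namely that every reflected unknown required by stages (iii) and (v) has in fact been produced earlier. The crucial observation is that the reflection couples $(i,\ell)$ with $(\ell,i)$ only when both indices lie in $\{0,\ldots,d-k\}=S_1\sqcup T'$, so the reflection graph on the unknowns decomposes into three blocks $S_1\times S_1$, $S_1\times T'$, and $T'\times T'$ which the five stages resolve in this order. Once the dependency check passes, the determinant of the global coefficient matrix factors as a product of invertible $x^e$ terms and invertible EVENODD submatrix determinants, proving the second transformed EVENODD code is MDS.
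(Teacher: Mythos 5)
Your proposal is correct and rests on exactly the two ingredients the paper uses: the $2\times 2$ reflection blocks are invertible because their determinant ($x^{e}$ in the first-transformation picture, $x^{p-e}$ in the systematic one) is a unit of $\mathbb{F}_2[x]/(1+x+\cdots+x^{p-1})$, and after back-substitution every residual system is a Vandermonde subsystem of the EVENODD encoding matrix, invertible by the MDS hypothesis. The paper organises the same elimination as a two-case analysis ($i_1<d-k$ versus $i_1\geq d-k$) on the systematic code, whereas your five-stage schedule (instances indexed by $S_1$ before those indexed by $T'$, which is the same order the paper uses) treats both cases uniformly; the one cosmetic slip is that the $x^{p-e}$ factors you absorb into $\gamma[i,\ell]$ come from the second transformation's parity expressions rather than the first transformation's plain parities, but either picture closes the argument since the two codes are linked by the unit-determinant substitution \eqref{eq:substi}.
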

\begin{proof}
The code is an MDS code if any $k$ out of $k+r$ columns can retrieve all
information bits. It is equivalent to show that the $k$ information columns
can be reconstructed from any $t$ information columns and any $k-t$ parity
columns, where $\max\{0,k-r\}\leq t\leq k$.
When $t=k$, we can obtain the $k$ information columns directly.

In the following, we consider the case of $t< k$. Suppose that columns $i_1,i_2,\ldots,i_{t}$ and columns $j_1,j_2,\ldots,j_{k-t}$ are connected with $0\leq i_1<\ldots < i_{t}\leq k-1$ and $k\leq j_1<\ldots<j_{k-t}\leq k+r-1$. We need to recover $k-t$ information columns $e_1,e_2,\ldots,e_{k-t}$, where $$e_1<e_2<\cdots < e_{k-t}\in \{0,1,\ldots,k-1\}\setminus \{i_1,i_2,\ldots,i_{t}\}.$$
Recall that we can obtain $t(d-k+1)$ information polynomials $a_{i_1,\ell}(x),\ldots,a_{i_{t},\ell}(x)$ and $(k-t)(d-k+1)$ parity polynomials $a_{j_1,\ell}(x),\ldots,a_{j_{k-t},\ell}(x)$ from the connected columns, where $\ell=0,1,\ldots,d-k$.

We divide the proof into two cases: $i_1<d-k$ and $i_1\geq d-k$. We first assume that $i_1<d-k$. By subtracting $t(d-k+1)$ information polynomials from $k-t$ parity polynomials $a_{j_{1},i_1}(x),\ldots,a_{j_{k-t},i_1}(x)$ each, we obtain $k-t$ syndrome polynomials over $\mathbb{F}_2[x]/(1+x+\cdots+x^{p-1})$ as
\begin{align*}
\scriptsize
&[
x^{p-e}a_{e_1,i_1}(x) \text{ } \cdots \text{ } x^{p-e}a_{e_{\alpha},i_1}(x) \text{ } a_{e_{\alpha+1},i_1}(x) \text{ } \cdots \text{ } a_{e_{k-t},i_1}(x)
] \\
&\begin{bmatrix}
x^{e_1(j_1-k)} & x^{e_1(j_2-k)} & \cdots & x^{e_1(j_{k-t}-k)}\\
x^{e_2(j_1-k)} & x^{e_2(j_2-k)} & \cdots & x^{e_2(j_{k-t}-k)}\\
\vdots & \vdots & \ddots & \vdots \\
x^{e_{k-t}(j_1-k)} & x^{e_{k-t}(j_2-k)} & \cdots & x^{e_{k-t}(j_{k-t}-k)}
\end{bmatrix},
\end{align*}
where $\alpha$ is an integer that ranges from 1 to $k-t-1$ with $e_{\alpha}\leq d-k$ and $e_{\alpha+1}\geq d-k+1$. In the decoding process from columns 1, 2, 4 and 5 of the example in Table \ref{table:A}, we have $k=4$, $t=2$, $e=1$, $i_1=1$, $e_1=0$, $e_2=3$, $j_1=k$, $j_2=k+1$ and $\alpha=1$. The two syndrome polynomials are
\begin{align*}
\begin{bmatrix}
p_1(x) & p_2(x)
\end{bmatrix}=\begin{bmatrix}
x^{p-1}a_{0,1}(x) & a_{3,1}(x)
\end{bmatrix}\cdot \begin{bmatrix}
1 & 1\\
1 & x^3 \\
\end{bmatrix}.
\end{align*}

As the $(k+r,k)$ EVENODD code is MDS, we can recover the polynomials $$x^{p-e}a_{e_1,i_1}(x), \ldots, x^{p-e}a_{e_{\alpha},i_1}(x), a_{e_{\alpha+1},i_1}(x), \ldots, a_{e_{k-t},i_1}(x),$$
and therefore, $a_{e_1,i_1}(x),a_{e_{2},i_1}(x), \ldots, a_{e_{k-t},i_1}(x)$ can be recovered. Let $c$ be an integer with $2\leq c\leq t$ such that $i_{c-1}<d-k$ and $i_{c}\geq d-k$. By the same argument, we can recover polynomials $a_{e_1,i_{h}}(x),a_{e_{2},i_{h}}(x), \ldots, a_{e_{k-t},i_{h}}(x)$ for $h=2,3,\ldots,c-1$.
Once the polynomials $$a_{e_1,i_{h}}(x),a_{e_{2},i_{h}}(x), \ldots,
a_{e_{k-t},i_{h}}(x)$$ for $h=1,2,\ldots,c-1$ are known, we can recover all
the other failed polynomials by first subtracting all $t(d-k+1)$ information
polynomials and the known polynomials $a_{e_1,i_{h}}(x),a_{e_{2},i_{h}}(x),
\ldots, a_{e_{k-t},i_{h}}(x)$ with $h=1,2,\ldots,c-1$ from parity polynomials
$a_{j_{1},i_{\ell}}(x),\ldots,a_{j_{k-t},i_{\ell}}(x)$, followed by solving
the failed polynomials according to the MDS property of the $(k+r,k)$ EVENODD
code, where $\ell\in \{0,1,\ldots,d-k\}\setminus\{i_1,i_2,\ldots,i_{c-1}\}$.

If $i_1\geq d-k$, then $i_t>\cdots > i_1\geq d-k$ and we can obtain the following syndrome polynomials by subtracting all $t(d-k+1)$ information polynomials from all $(k-t)(d-k+1)$ parity polynomials
\begin{align*}
&\begin{bmatrix}
a^{*}_{e_1,\ell}(x) & a^{*}_{e_2,\ell}(x) & \cdots & a^{*}_{e_{k-t},\ell}(x)
\end{bmatrix}\cdot \\
&\begin{bmatrix}
x^{e_1(j_1-k)} & x^{e_1(j_2-k)} & \cdots & x^{e_1(j_{k-t}-k)}\\
x^{e_2(j_1-k)} & x^{e_2(j_2-k)} & \cdots & x^{e_2(j_{k-t}-k)}\\
\vdots & \vdots & \ddots & \vdots \\
x^{e_{k-t}(j_1-k)} & x^{e_{k-t}(j_2-k)} & \cdots & x^{e_{k-t}(j_{k-t}-k)}
\end{bmatrix},
\end{align*}
where
\begin{equation}
\small
a^{*}_{e_i,\ell}(x) = \left\{
\begin{array}{lll}
a_{e_i,\ell}(x) & \textrm{if } e_i=\ell, \\
x^{p-e}a_{e_i,\ell}(x)+(1+x^{p-e})a_{\ell,e_i}(x) & \textrm{if } e_i<\ell, \\
x^{p-e}a_{e_i,\ell}(x)+x^{p-e}a_{\ell,e_i}(x) & \textrm{if } e_i>\ell,
\end{array}
\right.
\label{eq:parity1}
\end{equation}
$\ell=0,1,\ldots,d-k$.
The polynomials $a^{*}_{e_1,\ell}(x)$, $a^{*}_{e_2,\ell}(x), \ldots,
a^{*}_{e_{k-t},\ell}(x)$ can be recovered, because $(k+r,k)$ EVENODD code
is MDS. Then, we can obtain $a_{\ell,\ell}(x)$ directly, $a_{e_i,\ell}(x)$ for $e_i>\ell$ by $a^{*}_{e_i,\ell}(x)+a^{*}_{\ell,e_i}(x)$, and $a_{e_i,\ell}(x)$ for $e_i<\ell$ by $x^e(a_{\ell,e_i}(x)+a^{*}_{\ell,e_i}(x))$.
\end{proof}

We show in the next theorem that the second transformed EVENODD code has optimal access for the first $d-k+1$ columns.
\begin{theorem}
The repair bandwidth and repair access of column $i$ of the second transformed EVENODD code for $i=0,1,\ldots,d-k$ is optimal.
\label{thm:rep}
\end{theorem}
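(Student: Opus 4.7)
My repair scheme for column $i\in\{0,1,\ldots,d-k\}$ has each of the $d$ helper columns transmit a single polynomial, namely its row-$i$ polynomial. I pick the helper set to consist of all $k-1$ surviving information columns together with any $d-k+1$ of the $r$ parity columns; this totals $d$ columns and automatically includes the $d-k$ ``entangled'' columns in $\{0,1,\ldots,d-k\}\setminus\{i\}$. Each helper then transmits $p-1$ bits, so the total repair bandwidth is $d(p-1)$ bits. Since the transformed array has $(p-1)(d-k+1)$ rows per column, this matches the lower bound in \eqref{eq:rep}; and because no helper performs any arithmetic on its stored bits, the scheme is uncoded, so the access equals the bandwidth and is therefore also optimal.

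To verify correctness, I must show that the $d$ downloaded polynomials suffice to reconstruct the $d-k+1$ stored polynomials $a_{i,0}(x),\ldots,a_{i,d-k}(x)$ in column $i$. Each information helper $j\neq i$ supplies $a_{j,i}(x)$ directly by systematicity. For each chosen parity helper $J\in\{k,\ldots,k+r-1\}$, the downloaded row-$i$ polynomial is, by \eqref{eq:trans}, the inner product of the row vector $[1,x^{J-k},\ldots,x^{(k-1)(J-k)}]$ with column $i$ of the matrix in \eqref{eq:trans}. Substituting the now-known $a_{j,i}(x)$ for $j\neq i$ and moving them to the left-hand side yields, for each parity helper $J$, one linear equation over $\mathbb{F}_2[x]/(1+x+\cdots+x^{p-1})$ in the $d-k+1$ unknowns $a_{i,0}(x),\ldots,a_{i,d-k}(x)$, in which the coefficient of $a_{i,j}(x)$ equals $c_j\cdot x^{j(J-k)}$ with $c_j=1+x^{p-e}$ for $j<i$, $c_i=1$, and $c_j=x^{p-e}$ for $j>i$.

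The $d-k+1$ parity equations therefore form a square linear system whose coefficient matrix factors as $\mathrm{diag}(c_0,\ldots,c_{d-k})\cdot V$, where $V$ is the $(d-k+1)\times(d-k+1)$ submatrix of the Vandermonde encoding matrix in \eqref{eq:evenodd-en} obtained by restricting to its first $d-k+1$ rows and the $d-k+1$ columns indexed by the chosen parity helpers. The matrix $V$ is invertible by the MDS property of the $(k+r,k)$ EVENODD code, and the diagonal factors are units in the ring because $x^{p-e}$ has inverse $x^{e}$ (since $x^{p}=1$ in the quotient), while $1+x^{p-e}$ is coprime to $1+x+\cdots+x^{p-1}$ because $x$ has multiplicative order $p$. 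Solving the system then recovers $a_{i,0}(x),\ldots,a_{i,d-k}(x)$. The step I expect to be most delicate is this unit check for $1+x^{p-e}$, but it follows in one line from the cyclotomic coprimality argument already implicit in the MDS hypothesis used in Theorem~\ref{thm:mds}.
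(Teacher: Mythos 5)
Your proof is correct, but it takes a different route from the paper's, both in the choice of helper set and in the linear algebra. The paper fixes the $d$ helpers as the $d-k$ other columns of $\{0,\ldots,d-k\}$ plus any $k$ columns from $\{d-k+1,\ldots,k+r-1\}$; it observes that row $i$ of the transformed array is itself an EVENODD codeword whose ``information symbols'' are the entangled combinations $a^{*}_{0,i}(x),\ldots,a^{*}_{k-1,i}(x)$, erasure-decodes that row from the $k$ downloaded symbols (a direct application of ``any $k$ columns suffice''), and then peels off $a_{i,j}(x)$ by subtracting the separately downloaded $a_{j,i}(x)$. You instead take all $k-1$ surviving information columns plus $d-k+1$ parity columns, substitute the known $a_{j,i}(x)$ into the parity equations, and invert the resulting $(d-k+1)\times(d-k+1)$ system with coefficient matrix $\mathrm{diag}(c_0,\ldots,c_{d-k})\cdot V$. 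This is sound, but note that it leans on two facts the paper's route does not need explicitly: (i) invertibility of an \emph{arbitrary} square submatrix of the encoding matrix (equivalent to the MDS property for a systematic code, and indeed the same fact used in the proof of Theorem~\ref{thm:mds}, but a stronger-looking statement than ``any $k$ columns decode''); and (ii) the unit property of $1+x^{p-e}$ in $\mathbb{F}_2[x]/(1+x+\cdots+x^{p-1})$, which you justify correctly --- since $p$ is prime and $1\le p-e\le p-1$, no nontrivial $p$-th root of unity is a $(p-e)$-th root of unity, which is exactly the coprimality fact the paper invokes in its footnote for $1+x^3$. Both schemes download one stored polynomial from each of $d$ helpers, hence $d(p-1)$ bits against $m=(p-1)(d-k+1)$, matching \eqref{eq:rep} with uncoded repair, so either one proves the theorem as stated; the only practical difference is that the paper's helper set is the one reused in the recursive argument of Theorem~\ref{thm:rep1} and Algorithm~\ref{alg:A1}, whereas yours would need to be re-examined there.
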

\begin{proof}
For $i=0,1,\ldots,d-k$, column $i$ can be repaired by  downloading one polynomial from each of $d$ helper columns. Among $d$ columns, $d-k$ columns are columns $0,1,\ldots,i-1,i+1,\ldots,d-k$ and other $k$ columns are chosen from columns $d-k+1,\ldots,k+r-1$. Specifically, we can recover the polynomials $a_{i,i}(x)$, $x^{p-e}a_{j,i}(x)+(1+x^{p-e})a_{i,j}(x)$ for $j<i$ and $x^{p-e}(a_{\ell,i}(x)+a_{i,\ell}(x))$ for $\ell>i$, by downloading $k$ polynomials $a_{h_1,i}(x),\ldots,a_{h_k,i}(x)$ from columns $h_1,\ldots,h_k$, where $h_1\neq \ldots \neq h_k\in \{d-k+1,\ldots,k+r-1\}$, due to the MDS property of EVENODD codes. Then we download $d-k$ polynomials $a_{0,i}(x),\ldots,a_{i-1,i}(x),a_{i+1,i}(x),\ldots,a_{d-k,i}(x)$ from columns $0,1,\ldots,i-1,i+1,\ldots,d-k$. Finally, we subtract the downloaded polynomials $a_{0,i}(x),\ldots,a_{i-1,i}(x),a_{i+1,i}(x),\ldots,a_{d-k,i}(x)$ from the recovered polynomials $x^{p-e}a_{j,i}(x)+(1+x^{p-e})a_{i,j}(x)$ for $j<i$ and $x^{p-e}(a_{\ell,i}(x)+a_{i,\ell}(x))$ for $\ell>i$, to obtain polynomials $a_{i,0}(x),a_{i,1}(x),\ldots,a_{i,i-1}(x),a_{i,i+1}(x),\ldots,a_{i,d-k}(x)$. This completes the proof.
\end{proof}

Note that $\mathsf{EVENODD}_1$ with the first
transformation also satisfies Theorem \ref{thm:mds} and Theorem \ref{thm:rep}, as the two transformations are equivalent. In the following, let $\mathsf{EVENODD}_1$ be the transformed code with the first transformation and
$\mathsf{EVENODD}_2$ be the transformed code by applying the first transformation for the columns from $d-k+1$ to $2d-2k+1$ of $\mathsf{EVENODD}_1$. 
Specifically, we can obtain $\mathsf{EVENODD}_2$ as follows. Let $$t=d-k+1.$$
We first generate $t$ instances of the code $\mathsf{EVENODD}_1$ and view the $t$ polynomials stored in each column of $\mathsf{EVENODD}_1$ as a vector. For $\ell=0,1,\ldots,d-k$ and $h=0,1,\ldots,n-1$, the vector stored in column $h$ of instance $\ell$ of $\mathsf{EVENODD}_1$ is denoted as $\mathbf{v}^\ell_h$. For $i=0,1,\ldots,d-k$, column $t+i$ of $\mathsf{EVENODD}_2$ stores the following $t$ vectors ($t^2$ polynomials)
\begin{equation}
\begin{array}{ll}
& \mathbf{v}^0_{t+i}+\mathbf{v}^i_{t},\\
& \mathbf{v}^1_{t+i}+\mathbf{v}^i_{t+1},\ldots,\\
& \mathbf{v}^{i-1}_{t+i}+\mathbf{v}^i_{t+i-1},\\
& \mathbf{v}^i_{t+i},\\
& \mathbf{v}^{i+1}_{t+i}+(1+x^e)\mathbf{v}^i_{t+i+1},\\
& \mathbf{v}^{i+2}_{t+i}+(1+x^e)\mathbf{v}^i_{t+i+2},\ldots,\\
& \mathbf{v}^{d-k}_{t+i}+(1+x^e)\mathbf{v}^i_{t+d-k},
\end{array}
\label{eq:trans-vectors}
\end{equation}
where $1\leq e\leq p-1$. Note that the multiplication of a polynomial $x^e$ and a vector
\[
\mathbf{v}=\begin{bmatrix}
v_0 & v_1 &\ldots &v_{d-k}
\end{bmatrix}
\]
is defined as
\[
x^e\mathbf{v}=\begin{bmatrix}
x^ev_0 & x^ev_1 &\ldots &x^ev_{d-k}
\end{bmatrix}
\]
and the addition of two vectors
\[
\mathbf{v}^1=\begin{bmatrix}
v^1_0 & v^1_1 &\ldots &v^1_{d-k}
\end{bmatrix}
\]
and
\[
\mathbf{v}^2=\begin{bmatrix}
v^2_0 & v^2_1 &\ldots &v^2_{d-k}
\end{bmatrix}
\]
is
\[
\mathbf{v}^1+\mathbf{v}^2=\begin{bmatrix}
v^1_0+v^2_0 & v^1_1+v^2_1 &\ldots &v^1_{d-k}+v^2_{d-k}
\end{bmatrix}.
\]
For $h\in \{0,1,\ldots,n-1\}\setminus \{t,t+1, \ldots,2t-1\}$, column $h$ stores $t$ vectors ($t^2$ polynomials)
\[
\mathbf{v}^0_{h},\mathbf{v}^1_{h},\ldots,\mathbf{v}^{d-k}_{h}.
\]
For $\ell=0,1,\ldots,d-k$ and $h=t,t+1,\ldots,n-1$, we have that
\[
\mathbf{v}^\ell_{h}=\begin{bmatrix}
a_{h,\ell t}(x) & a_{h,\ell t+1}(x) & \cdots & a_{h,(\ell+1) t-1}(x)
\end{bmatrix}
\]
according to the first transformation.
Table \ref{table:thm-proof2} shows the storage of the first $2t$ columns of $\mathsf{EVENODD}_2$ by \eqref{eq:trans-vectors}. We show in the next theorem that the optimal repair access property of the first $d-k+1$ columns of $\mathsf{EVENODD}_1$ code is maintained in $\mathsf{EVENODD}_2$.

\begin{table*}[!t]
\scriptsize
\caption{The storage of the first $2t$ columns of $\mathsf{EVENODD}_2$, where $t=d-k+1$.}
\begin{center}
\begin{tabular}{|c|c|c|c|}
\hline
Column 0 & Column 1 & $\cdots$  & Column $d-k$  \\
\hline
$a_{0,0}(x)$& $a_{1,0}(x)+a_{0,1}(x)$& $\cdots$& $a_{d-k,0}(x)+a_{0,d-k}(x)$  \\
$a_{0,1}(x)+(1+x^e)a_{1,0}(x)$& $a_{1,1}(x)$& $\cdots$& $a_{d-k,1}(x)+a_{1,d-k}(x)$ \\
$\vdots$&$\vdots$& $\ddots$& $\vdots$  \\
$a_{0,d-k}(x)+(1+x^e)a_{d-k,0}(x)$& $a_{1,d-k}(x)+(1+x^e)a_{d-k,1}(x)$&$\cdots$& $a_{d-k,d-k}(x)$  \\
\hline
$\vdots$&$\vdots$& $\vdots$& $\vdots$  \\\hline
$a_{0,(d-k)t}(x)$& $a_{1,(d-k)t}(x)+a_{0,(d-k)t+1}(x)$& $\cdots$& $a_{d-k,(d-k)t}(x)+a_{0,t^2-1}(x)$  \\
$a_{0,(d-k)t+1}(x)+(1+x^e)a_{1,(d-k)t}(x)$& $a_{1,(d-k)t+1}(x)$& $\cdots$& $a_{d-k,(d-k)t+1}(x)+a_{1,t^2-1}(x)$ \\
$\vdots$&$\vdots$& $\ddots$& $\vdots$  \\
$a_{0,t^2-1}(x)+(1+x^e)a_{d-k,(d-k)t}(x)$& $a_{1,t^2-1}(x)+(1+x^e)a_{d-k,(d-k)t+1}(x)$&$\cdots$& $a_{d-k,t^2-1}(x)$  \\
\hline
\end{tabular}
\begin{tabular}{|c|c|c|c|c|c|c|}
\hline
Column $t$ & Column $t+1$ & $\cdots$ & Column $2t-1$\\
\hline
 $a_{t,0}(x)$ &  $a_{t+1,0}(x)+a_{t,t}(x)$ & $\cdots$ &  $a_{2t-1,0}(x)+a_{t,(d-k)t}(x)$\\
 $a_{t,1}(x)$ &   $a_{t+1,1}(x)+a_{t,t+1}(x)$  & $\cdots$ &  $a_{2t-1,1}(x)+a_{t,(d-k)t+1}(x)$\\
 $\vdots$ & $\vdots$  & $\cdots$&$\vdots$ \\
 $a_{t,d-k}(x)$ &   $a_{t+1,d-k}(x)+a_{t,2t-1}(x)$ & $\cdots$  &  $a_{2t-1,d-k}(x)+a_{t,t^2-1}(x)$\\
\hline
  $a_{t,t}(x)+(1+x^e)a_{t+1,0}(x)$ & $a_{t+1,t}(x)$ & $\cdots$ & $a_{2t-1,t}(x)+a_{t+1,(d-k)t}(x)$\\
  $a_{t,t+1}(x)+(1+x^e)a_{t+1,1}(x)$ & $a_{t+1,t+1}(x)$  & $\cdots$& $a_{2t-1,t+1}(x)+a_{t+1,(d-k)t+1}(x)$\\
  $\vdots$ & $\vdots$ & $\cdots$ & $\vdots$\\
  $a_{t,2t-1}(x)+(1+x^e)a_{t+1,d-k}(x)$ & $a_{t+1,2t-1}(x)$  & $\cdots$& $a_{2t-1,2t-1}(x)+a_{t+1,t^2-1}(x)$\\
\hline
  $\vdots$ & $\vdots$ & $\cdots$ & $\vdots$\\
\hline
  $a_{t,(d-k)t}(x)+(1+x^e)a_{2t-1,0}(x)$&  $a_{t+1,(d-k)t}(x)+(1+x^e)a_{2t-1,t}(x)$ & $\cdots$& $a_{2t-1,(d-k)t}(x)$\\
  $a_{t,(d-k)t+1}(x)+(1+x^e)a_{2t-1,1}(x)$&  $a_{t+1,(d-k)t+1}(x)+(1+x^e)a_{2t-1,t+1}(x)$ & $\cdots$& $a_{2t-1,(d-k)t+1}(x)$\\
  $\vdots$& $\vdots$ & $\cdots$& $\vdots$\\
  $a_{t,t^2-1}(x)+(1+x^e)a_{2t-1,d-k}(x)$&  $a_{t+1,t^2-1}(x)+(1+x^e)a_{2t-1,2t-1}(x)$ & $\cdots$& $a_{2t-1,t^2-1}(x)$\\
\hline
\end{tabular}
\end{center}
\label{table:thm-proof2}
\end{table*}

\begin{theorem}
The repair access of column $i$ of $\mathsf{EVENODD}_2$ code for $i=0,1,\ldots,2d-2k+1$ is optimal.
\label{thm:rep1}
\end{theorem}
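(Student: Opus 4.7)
The plan is to treat the two ranges of $i$ separately, with $t = d-k+1$. For $i \in \{t, \ldots, 2t-1\}$, I would invoke Theorem~\ref{thm:rep} at the vector level: the construction in \eqref{eq:trans-vectors} is literally the first transformation of Section~\ref{sec:first-trans} applied to $\mathsf{EVENODD}_1$, with the $t$ chosen columns being $\{t,\ldots,2t-1\}$ and every ``polynomial'' replaced by a vector $\mathbf{v}^L_h$ of $t$ polynomials. Since addition and scalar multiplication by $x^e$ and $1+x^e$ act entry-wise on these vectors, the proof of Theorem~\ref{thm:rep} transports verbatim. The repair accesses one vector (i.e.\ $t$ polynomials) from each of $d$ helper columns, giving $dt$ polynomials in total, matching the cut-set bound \eqref{eq:rep} for an array with $t^2(p-1)$ bits per column.

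For $i \in \{0, \ldots, t-1\}$, column $i$ was untouched by the outer transformation and stores $\mathbf{v}^0_i,\ldots,\mathbf{v}^{t-1}_i$, the $i$-th column of $\mathsf{EVENODD}_1$ in each outer instance $L$. My strategy is to fix one $d$-element helper set and invoke the optimal $\mathsf{EVENODD}_1$ repair of Theorem~\ref{thm:rep} for column $i$ in parallel across the $t$ outer instances, pulling one polynomial per instance from each helper column -- that is, $t$ per helper and $dt$ overall. Following Theorem~\ref{thm:rep}, I would take as helpers all $t-1=d-k$ columns in $\{0,\ldots,t-1\}\setminus\{i\}$ together with $k$ columns chosen from $\{t,\ldots,n-1\}$. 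For a helper $h$ in the outer-untouched range $\{0,\ldots,t-1\}\setminus\{i\}\cup\{2t,\ldots,n-1\}$, the $L$-th stored vector is simply $\mathbf{v}^L_h$, so the polynomial demanded by the inner repair for instance $L$ can be read off directly.

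The nontrivial case is a helper $h=t+j\in\{t,\ldots,2t-1\}$. By \eqref{eq:trans-vectors} its $L$-th stored vector equals $\mathbf{v}^L_{t+j}$ for $L=j$, $\mathbf{v}^L_{t+j}+\mathbf{v}^j_{t+L}$ for $L<j$, and $\mathbf{v}^L_{t+j}+(1+x^e)\mathbf{v}^j_{t+L}$ for $L>j$, so accessing its $i$-th coordinate returns, for $L\neq j$, a combination of the desired polynomial $a_{t+j,Lt+i}(x)$ with its ``twin'' $a_{t+L,jt+i}(x)$. I would unravel these twins by the same symmetric-pair argument as in the Remark after \eqref{eq:trans1}: adding the $i$-th-coordinate downloads from column $t+j$ at position $L$ and column $t+L$ at position $j$ isolates $x^e a_{t+j,Lt+i}(x)$, from which $a_{t+j,Lt+i}(x)$ follows by multiplication by $x^{p-e}$ and then $a_{t+L,jt+i}(x)$ by subtraction; the diagonal $L=j$ is already pure. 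Once all the raw polynomials demanded by an inner repair have been produced, instance $L$'s optimal repair given by Theorem~\ref{thm:rep} reconstructs $\mathbf{v}^L_i$, and doing this for every $L$ restores column $i$.

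The main obstacle is closing the pair-unravelling: the symmetric download from column $t+L$ at position $j$ exists only when $t+L$ is itself a helper. I would handle this by exercising the freedom Theorem~\ref{thm:rep} leaves in picking the $k$ helpers from $\{t,\ldots,n-1\}$ to choose the outer-transformed columns in an ``all or none'' manner -- taking either all $t$ of them together with $k-t$ columns from $\{2t,\ldots,n-1\}$ when $t\le k$ (forced in the tight case $d=k+r-1$, since there $n-t=k$), or none at all when $n-2t\ge k$. In either configuration every pair $\{L,j\}$ required for the unravelling has both ends in the helper set, and the total download is exactly $dt$ polynomials, meeting \eqref{eq:rep}.
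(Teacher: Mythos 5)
Your proposal is correct and follows essentially the same route as the paper: lift Theorem~\ref{thm:rep} to the vector level for columns $t,\ldots,2t-1$, and for columns $0,\ldots,t-1$ run the instance-wise $\mathsf{EVENODD}_1$ repair after un-entangling the paired downloads from columns $t,\ldots,2t-1$ via the symmetric-sum trick, which is exactly why the paper also takes all of that partition as helpers. The only blemish is a harmless index swap in the un-entangling step (the sum of the two paired downloads isolates $x^e a_{t+L,jt+i}(x)$, not $x^e a_{t+j,Lt+i}(x)$), which does not affect the argument.
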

\begin{proof}
By Theorem \ref{thm:rep}, we can repair $t$ vectors ($t^2$ polynomials) in column $i$ for $i=t,t+1\ldots,2t-1$ by downloading $k$ vectors ($kt$ polynomials)
\begin{equation}
\mathbf{v}^{i-t}_{h_1},\mathbf{v}^{i-t}_{h_2},\ldots,\mathbf{v}^{i-t}_{h_k}
\label{eq:thm3-1}
\end{equation}
from columns $h_j$ with $j=0,1,\ldots,k-1$, where $h_j=j$ for $j=0,1,\ldots,t-1$ and $h_j\in \{2t,\ldots,k+r-1\}$ for $j=t,t+1,\ldots,k-1$, and the following $d-k$ vectors ($(d-k)t$ polynomials)
\begin{equation}
\begin{array}{ll}
\mathbf{v}^{i-t}_t+(1+x^e)\mathbf{v}^{0}_i,\ldots,\mathbf{v}^{i-t}_{i-1}+(1+x^e)\mathbf{v}^{i-t-1}_i,
\mathbf{v}^{i-t}_{i+1}+\mathbf{v}^{i-t+1}_i,\ldots,\mathbf{v}^{i-t}_{2t-1}+\mathbf{v}^{t-1}_i.
\end{array}
\label{eq:thm3-2}
\end{equation}
Specifically, we can first compute vectors
\[
\mathbf{v}^{i-t}_{t},\mathbf{v}^{i-t}_{t+1},\ldots,\mathbf{v}^{i-t}_{2t-1}
\]
from $k$ vectors in \eqref{eq:thm3-1}, and then recover the $t$ vectors in column $i$ with the above $t$ vectors and the downloaded $d-k$ vectors in \eqref{eq:thm3-2}.

Consider the repair of column $i$ with $i=0,1,\ldots,d-k$.
We can repair $t^2$ polynomials in column $i$ by downloading
$(2t-1)t$ polynomials from $2t-1$ columns $0,1,\ldots,i-1,i+1,\ldots,t-1,t,t+1,\ldots,2t-1$ in rows $i+1,i+1+t,\ldots,i+1+(t-1)t$, and $(d-2t+1)t$ polynomials from
$d-2t+1$ columns $h_1,\ldots,h_{d-2t+1}$ in rows $i+1,i+1+t,\ldots,i+1+(t-1)t$ with indices $2t\leq h_1<\ldots <h_{d-2t+1}\leq n-1$. Note that the $t^2$ polynomials downloaded from columns $t$ to $2t-1$ are in \eqref{eq:thm3-3} (in the next page).
\begin{figure*}[!t]
\begin{equation}
\small
\begin{bmatrix}
a_{t,i}(x) &a_{t+1,i}(x)+a_{t,t+i}(x)&\cdots &a_{2t-1,i}(x)+a_{t,(d-k)t+i}(x)\\
a_{t,t+i}(x)+(1+x^e)a_{t+1,i}(x) &a_{t+1,t+i}(x)&\cdots &a_{2t-1,t+i}(x)+a_{t+1,(d-k)t+i}(x)\\
\vdots & \vdots & \ddots & \vdots \\
a_{t,(d-k)t+i}(x)+(1+x^e)a_{2t-1,i}(x) &a_{t+1,(d-k)t+i}(x)+(1+x^e)a_{2t-1,t+i}(x)&\cdots &a_{2t-1,(d-k)t+i}(x)\\
\end{bmatrix}.
\label{eq:thm3-3}
\end{equation}
\end{figure*}
We can compute $a_{t+1,i}(x)$ and $a_{t,t+i}(x)$ from $a_{t+1,i}(x)+a_{t,t+i}(x)$ and $a_{t,t+i}(x)+(1+x^e)a_{t+1,i}(x)$ by $$\frac{(a_{t,t+i}(x)+(1+x^e)a_{t+1,i}(x))-(a_{t+1,i}(x)+a_{t,t+i}(x))}{x^e}$$ and 
\begin{align*}
\frac{(1+x^e)(a_{t+1,i}(x)+a_{t,t+i}(x))-(a_{t,t+i}(x)+(1+x^e)a_{t+1,i}(x))}{x^e}, 
\end{align*}
respectively. Similarly, we can compute $t^2$ polynomials \[
a_{t,\ell t+i}(x),a_{t+1,\ell t+i}(x), \ldots,a_{2t-1,\ell t+i}(x),
\]
from $t^2$ polynomials in \eqref{eq:thm3-3} (in the next page),
where $\ell=0,1,\ldots,d-k$. Together with $(d-2t+1)t$ polynomials
\[
a_{h_1,\ell t+i}(x),a_{h_2,\ell t+i}(x), \ldots, a_{h_{d-2t+1},\ell t+i}(x),
\]
with $\ell=0,1,\ldots,d-k$ downloaded from
$d-2t+1$ columns $h_1,\ldots,h_{d-2t+1}$, we can compute the following $t^2$ polynomials
\[
a_{0,\ell t+i}(x),a_{1,\ell t+i}(x),\ldots, a_{d-k,\ell t+i}(x),
\]
with $\ell=0,1,\ldots,d-k$. Finally, we can recover all $t^2$ polynomials with the above $t^2$ polynomials and the downloaded $(t-1)t$ polynomials from $t-1$ columns $0,1,\ldots,i-1,i+1,\ldots,t-1$.
\end{proof}

Each column of $\mathsf{EVENODD}_1$ stores $d-k+1$ polynomials, we can repair each of the first $d-k+1$ columns by accessing one polynomial from each of the $d$ columns according to Theorem \ref{thm:rep} and the repair access is optimal according to \eqref{eq:rep}. In $\mathsf{EVENODD}_2$, each column has $(d-k+1)^2$ polynomials. According to Theorem \ref{thm:rep1}, the $(d-k+1)^2$ polynomial in each of the first $2(d-k+1)$ columns can be recovered by accessing $d-k+1$ polynomials from each of the $d$ columns and the repair access is optimal according to \eqref{eq:rep}.

\begin{table*}[!t]
\scriptsize
\caption{$\mathsf{EVENODD}_2$ by applying the first transformation twice for EVENODD codes with
$k=4$, $r=2$, $d=5$ and $e=1$.}
\vspace{-8pt}
\begin{center}
\begin{tabular}{|c|c|c|c|c|c|}
\hline
Column 0 & Column 1  & Column 2  & Column 3 & Column 4  & Column 5 \\
\hline
$a_{0,0}(x)$& $a_{1,0}(x)+a_{0,1}(x)$ & $a_{2,0}(x)$& $a_{3,0}(x)+a_{2,2}(x)$ & $a_{4,0}(x)$ & $a_{5,0}(x)$ \\
\hline
$a_{0,1}(x)+(1+x)a_{1,0}(x)$& $a_{1,1}(x)$ &$a_{2,1}(x)$& $a_{3,1}(x)+a_{2,3}(x)$ & $a_{4,1}(x)$ & $a_{5,1}(x)$\\
\hline
\hline
$a_{0,2}(x)$& $a_{1,2}(x)+a_{0,3}(x)$ & $a_{2,2}(x)+(1+x)a_{3,0}(x)$& $a_{3,2}(x)$ & $a_{4,2}(x)$ & $a_{5,2}(x)$ \\
\hline
$a_{0,3}(x)+(1+x)a_{1,2}(x)$& $a_{1,3}(x)$ &$a_{2,3}(x)+(1+x)a_{3,1}(x)$& $a_{3,3}(x)$ & $a_{4,3}(x)$ & $a_{5,3}(x)$\\ \hline
\end{tabular}
\end{center}
\label{table:evenodd2-first}
\end{table*}

Table \ref{table:evenodd2-first} shows the $\mathsf{EVENODD}_2$ by applying the first transformation twice for EVENODD codes with
$k=4$, $r=2$, $d=5$ and $e=1$. We can repair column 0 by downloading the following 10 polynomials
\begin{align*}
&a_{1,0}(x)+a_{0,1}(x), a_{2,0}(x), a_{3,0}(x)+a_{2,2}(x), a_{4,0}(x), \\
&a_{5,0}(x),a_{1,2}(x)+a_{0,3}(x), a_{2,2}(x)+(1+x)a_{3,0}(x), \\
&a_{3,2}(x), a_{4,2}(x), a_{5,2}(x).
\end{align*}
Specifically, we first compute $a_{2,2}(x)$ and $a_{3,0}(x)$ from $a_{3,0}(x)+a_{2,2}(x)$ and $a_{2,2}(x)+(1+x)a_{3,0}(x)$. Then, we can compute $a_{0,0}(x),a_{1,0}(x)$ and $a_{0,2}(x),a_{1,2}(x)$ from
\[
a_{2,0}(x),a_{3,0}(x),a_{4,0}(x),a_{5,0}(x),
\]
and
\[
a_{2,2}(x),a_{3,2}(x),a_{4,2}(x),a_{5,2}(x),
\]
respectively, according to the MDS property of EVENODD codes. Finally, we can recover $a_{0,1}(x)+(1+x)a_{1,0}(x)$ and $a_{0,3}(x)+(1+x)a_{1,2}(x)$ by $(a_{1,0}(x)+a_{0,1}(x))+xa_{1,0}(x)$ and $(a_{1,2}(x)+a_{0,3}(x))+xa_{1,2}(x)$, respectively. According to Theorem \ref{thm:rep1}, we can repair each of the first four columns by downloading 10 polynomials and the repair access is optimal.

\begin{table*}[tbh]
\scriptsize
\caption{Systematic $\mathsf{EVENODD}_2$ code by applying the second transformation twice for EVENODD code with $k=4$, $r=2$, $d=5$ and $e=1$.}
\begin{center}
\begin{tabular}{|c|c|c|c|c|}
\hline
\multicolumn{4}{|c|}{Information columns}     & \multicolumn{1}{l|}{Parity column 0}  \\
\hline
$a_{0,0}(x)$& $a_{1,0}(x)$ & $a_{2,0}(x)$& $a_{3,0}(x)$ & $a_{0,0}(x)+(x^{p-1}a_{1,0}(x)+x^{p-1}a_{0,1}(x))+a_{2,0}(x)+(x^{p-1}a_{3,0}(x)+x^{p-1}a_{2,2}(x))$ \\
\hline
$a_{0,1}(x)$& $a_{1,1}(x)$ & $a_{2,1}(x)$& $a_{3,1}(x)$ & $x^{p-1}a_{0,1}(x)+(1+x^{p-1})a_{1,0}(x)+a_{1,1}(x)+a_{2,1}(x)+x^{p-1}a_{3,1}(x)+x^{p-1}a_{2,3}(x)$ \\
\hline
$a_{0,2}(x)$& $a_{1,2}(x)$ & $a_{2,2}(x)$& $a_{3,2}(x)$ & $a_{0,2}(x)+(x^{p-1}a_{1,2}(x)+x^{p-1}a_{0,3}(x))+x^{p-1}a_{2,2}(x)+(1+x^{p-1})a_{3,0}(x)+a_{3,2}(x)$ \\
\hline
$a_{0,3}(x)$& $a_{1,3}(x)$ & $a_{2,3}(x)$& $a_{3,3}(x)$ & $x^{p-1}a_{0,3}(x)+(1+x^{p-1})a_{1,2}(x)+a_{1,3}(x)+x^{p-1}a_{2,3}(x)+(1+x^{p-1})a_{3,1}(x)+a_{3,3}(x)$ \\
\hline
\end{tabular}
\begin{tabular}{|c|c|c|c|c|}
\hline
Parity column 1\\
\hline
 $a_{0,0}(x)+a_{1,0}(x)+a_{0,1}(x)+x^2a_{2,0}(x)+(x^{2}a_{3,0}(x)+x^{2}a_{2,2}(x))$ \\
\hline
  $x^{p-1}a_{0,1}(x)+(1+x^{p-1})a_{1,0}(x)+xa_{1,1}(x)+x^2a_{2,1}(x)+x^{2}a_{3,1}(x)+x^{2}a_{2,3}(x)$\\
\hline
  $a_{0,2}(x)+(a_{1,2}(x)+a_{0,3}(x))+xa_{2,2}(x)+(x+x^{2})a_{3,0}(x)+x^3a_{3,2}(x)$ \\
\hline
$x^{p-1}a_{0,3}(x)+(1+x^{p-1})a_{1,2}(x)+xa_{1,3}(x)+xa_{2,3}(x)+(x+x^{2})a_{3,1}(x)+x^3a_{3,3}(x)$ \\
\hline
\end{tabular}
\end{center}
\label{table:first2}
\end{table*}

Consider the systematic $\mathsf{EVENODD}_2$ code  with $k=4$, $r=2$, $d=5$ and $e=1$ in Table \ref{table:first2}. The repair access of column $i$ for $i=0,1,2,3$ is optimal. We can recover the four polynomials in column 0 by downloading 10 polynomials
\begin{align*}
&a_{1,0}(x),a_{1,2}(x),a_{2,0}(x),a_{2,2}(x),a_{3,0}(x),a_{3,2}(x),\\
&a_{4,0}(x),a_{4,2}(x),a_{5,0}(x),a_{5,2}(x).
\end{align*}
By subtracting polynomials $a_{1,0}(x),a_{2,0}(x),a_{3,0}(x)$ and $a_{2,2}(x)$ from $a_{4,0}(x)$ and $a_{5,0}(x)$ each, we can obtain two polynomials
\begin{align*}
p_1(x)=&a_{0,0}(x)+x^{p-1}a_{0,1}(x),\\
p_2(x)=&a_{0,0}(x)+a_{0,1}(x).
\end{align*}
Thus, we can recover $a_{0,0}(x)$ and $a_{0,1}(x)$ by $\frac{p_1(x)+x^{p-1}p_2(x)}{1+x^{p-1}}$ and $\frac{p_1(x)+p_2(x)}{1+x^{p-1}}$, respectively. Similarly, we can first obtain
\begin{align*}
p_3(x)=&a_{0,2}(x)+x^{p-1}a_{0,3}(x),\\
p_4(x)=&a_{0,2}(x)+a_{0,3}(x),
\end{align*}
by subtracting polynomials $a_{1,2}(x),a_{2,2}(x),a_{3,2}(x)$ and $a_{3,0}(x)$ from $a_{4,2}(x)$ and $a_{5,2}(x)$ each, and then recover the other two polynomials in column 0 by $\frac{p_3(x)+x^{p-1}p_4(x)}{1+x^{p-1}}$ and $\frac{p_3(x)+p_4(x)}{1+x^{p-1}}$. Column~1 can be recovered by downloading
\begin{align*}
&a_{0,1}(x),a_{0,3}(x),a_{2,1}(x),a_{2,3}(x),a_{3,1}(x),a_{3,3}(x),\\
&a_{4,1}(x),a_{4,3}(x),a_{5,1}(x),a_{5,3}(x).
\end{align*}

The first transformed EVENODD codes also satisfy the above three theorems, as the two transformations are equivalent. The $\mathsf{EVENODD}_1$ codes have optimal repair for each of the first $d-k+1$ columns according to Theorem \ref{thm:rep}. By applying the second transformation for the columns from $d-k+1$ to $2d-2k+1$ of $\mathsf{EVENODD}_1$ codes, we obtain $\mathsf{EVENODD}_2$ codes that have optimal repair for each of the columns from 0 to $2d-2k+1$ according to Theorem \ref{thm:rep1}.
Similarly, we can transform the original EVENODD codes for the columns between $i(d-k+1)$ and $((i+1)(d-k+1)-1)\bmod n$ to obtain the transformed EVENODD codes with optimal repair with the columns between $i(d-k+1)$ and $((i+1)(d-k+1)-1)\bmod n$, where $i=1,2,\ldots,\lceil \frac{n}{d-k+1}\rceil-1$.
The polynomial $1+x^e$ used in the transformation in \eqref{eq:trans1} is called the \emph{encoding coefficient} associated with the transformation. We may replace the encoding coefficient $1+x^e$ by other polynomials in $\mathbb{F}_2[x]/(1+x+\cdots+x^{p-1})$, such as $x^e$, as long as the three theorems in Section \ref{sec:prop} still hold under the specific binary MDS array codes. It is easy to check that the three theorems in Section \ref{sec:prop} hold for EVENODD codes, if we replace the encoding coefficient $1+x^e$ by $x^e$. More generally, we can view the $d-k+1$ polynomials in \eqref{eq:trans1} as $d-k+1$ vectors with length $p-1$ and compute the vectors by the summation of some permutated vectors, as long as the three theorems in Section \ref{sec:prop} hold. With more general transformation, we may combine the transformation and the existing binary MDS array codes with efficient repair for any single information column to obtain the transformed codes that have efficient repair for both information and parity columns. Recall that there are some efficient repair schemes for any information column of RDP \cite{xiang2010}, X-code \cite{xu2014}, EVENODD
\cite{wang2010} and the binary MDS array codes \cite{hou2018c}. In Section \ref{sec:trans-exm}, we will take an example of EVENODD to show how to design the specific transformation for EVENODD to enable optimal repair for each of the parity columns and preserve the efficient repair property for any single information column.
We also give the transformation for the binary MDS array codes \cite{hou2018c} with efficient repair access for any single information column such that the transformed codes have optimal repair access for any single parity column and asymptotically optimal repair access for any single information column in Section~\ref{sec:trans-exm}.

\section{Construction of Multi-layer Transformed EVENODD Codes}

In the section, we first present the construction of multi-layer transformed
EVENODD codes by applying the transformation given in Section
\ref{sec:framework}. We then give a repair algorithm for any single column
with optimal repair access.

\subsection{Construction}
\label{sec:cons}
The codes considered herein have $k$ information columns and $r$ parity columns. For notational convenience, we divide $k$ information columns into $\lceil \frac{k}{d-k+1}\rceil$ \emph{information partitions}, each of the information partitions has $d-k+1$ columns. For $i=1,2,\ldots,\lceil \frac{k}{d-k+1}\rceil-1$, information partition $i$ contains columns between $(i-1)(d-k+1)$ and $(i(d-k+1)-1)$. Information partition $\lceil \frac{k}{d-k+1}\rceil$ contains the last $d-k+1$ information columns. Similarly, the $r$ parity columns are divided into $\lceil \frac{r}{d-k+1}\rceil$ \emph{parity partitions} and parity column $i$ contains parity columns between $(i-1)(d-k+1)$ and $(i(d-k+1)-1)$ for $i=1,2,\ldots,\lceil \frac{r}{d-k+1}\rceil-1$. Parity partition $\lceil \frac{r}{d-k+1}\rceil$ contains the last $d-k+1$ parity columns. Therefore, we obtain $\lceil \frac{k}{d-k+1}\rceil+\lceil \frac{r}{d-k+1}\rceil$ partitions, contains $\lceil \frac{k}{d-k+1}\rceil$ information partitions and $\lceil \frac{r}{d-k+1}\rceil$ parity partitions. We label the index of the partitions from 1 to $\lceil \frac{k}{d-k+1}\rceil+\lceil \frac{r}{d-k+1}\rceil$. The construction is given in the following.

By applying the transformation for the first information partition (the first $d-k+1$ columns) of EVENODD code, we can obtain $\mathsf{EVENODD}_1$ with each column having $d-k+1$ polynomials, such that $\mathsf{EVENODD}_1$ is MDS according to Theorem \ref{thm:mds} and has optimal repair bandwidth for the first $d-k+1$ information columns according to Theorem \ref{thm:rep}. By applying the transformation for the second partition (columns between $d-k+1$ and $2(d-k+1)-1$) of $\mathsf{EVENODD}_1$, we obtain $\mathsf{EVENODD}_2$ with each column having $(d-k+1)^2$ polynomials that is MDS code according to Theorem \ref{thm:mds} and has optimal repair bandwidth for the first $2(d-k+1)$ information columns according to Theorem~\ref{thm:rep} and Theorem~\ref{thm:rep1}.

For $j=1,2,\ldots,\lceil \frac{k}{d-k+1}\rceil-1$, by recursively applying the transformation for information partition $i+1$ of $\mathsf{EVENODD}_j$ code, we can obtain $\mathsf{EVENODD}_{\lceil \frac{k}{d-k+1}\rceil}$.
Specifically, we can obtain $\mathsf{EVENODD}_{j+1}$ by applying the transformation for $\mathsf{EVENODD}_{j}$ as follows, where $j=1,2,\ldots,\lceil \frac{k}{d-k+1}\rceil-1$. We generate $d-k+1$ instances of the code $\mathsf{EVENODD}_j$ and view the $(d-k+1)^j$ polynomials stored in each column of $\mathsf{EVENODD}_j$ as a vector. For $\ell=0,1,\ldots,d-k$ and $h=0,1,\ldots,n-1$, denote $(d-k+1)^j$ polynomials stored in column $h$ of instance $\ell$ of $\mathsf{EVENODD}_j$ as the vector $\mathbf{v}^\ell_h$. For $i=0,1,\ldots,d-k$, column $j(d-k+1)+i$ of $\mathsf{EVENODD}_{j+1}$ stores the following $d-k+1$ vectors ($(d-k+1)^{j+1}$ polynomials)
\begin{align*}
\begin{array}{ll}
& \mathbf{v}^0_{(d-k+1)+i}+\mathbf{v}^i_{(d-k+1)},\\
& \mathbf{v}^1_{(d-k+1)+i}+\mathbf{v}^i_{(d-k+1)+1},\ldots,\\
& \mathbf{v}^{i-1}_{(d-k+1)+i}+\mathbf{v}^i_{(d-k+1)+i-1},\\
& \mathbf{v}^i_{(d-k+1)+i},\\
& \mathbf{v}^{i+1}_{(d-k+1)+i}+(1+x^e)\mathbf{v}^i_{(d-k+1)+i+1},\\
& \mathbf{v}^{i+2}_{(d-k+1)+i}+(1+x^e)\mathbf{v}^i_{(d-k+1)+i+2},\ldots,\\
& \mathbf{v}^{d-k}_{(d-k+1)+i}+(1+x^e)\mathbf{v}^i_{(d-k+1)+d-k},
\end{array}
\end{align*}
where $1\leq e\leq p-1$.
For $h\in \{0,1,\ldots,n-1\}\setminus \{j(d-k+1),j(d-k+1)+1, \ldots,(j+1)(d-k+1)-1\}$, column $h$ stores $d-k+1$ vectors ($(d-k+1)^{j+1}$ polynomials)
\[
\mathbf{v}^0_{h},\mathbf{v}^1_{h},\ldots,\mathbf{v}^{d-k}_{h}.
\]
Note that the transformation from $\mathsf{EVENODD}_{j}$ to $\mathsf{EVENODD}_{j+1}$ is the same as the transformation from $\mathsf{EVENODD}_{1}$ to $\mathsf{EVENODD}_{2}$, and we can show that the optimal repair property of the first $j(d-k+1)$ columns of $\mathsf{EVENODD}_{j}$ is maintained in $\mathsf{EVENODD}_{j+1}$ by the similar proof of Theorem \ref{thm:rep1}.
Therefore, we can obtain the code $\mathsf{EVENODD}_{\lceil \frac{k}{d-k+1}\rceil}$ and each column has $(d-k+1)^{\lceil \frac{k}{d-k+1}\rceil}$ polynomials. $\mathsf{EVENODD}_{\lceil \frac{k}{d-k+1}\rceil}$ is MDS code according to Theorem \ref{thm:mds} and we can repair each of the first $k$ columns by downloading $(d-k+1)^{\lceil \frac{k}{d-k+1}\rceil-1}$ polynomials from each of the chosen $d$ columns, the repair bandwidth of each of the first $k$ columns is optimal according to Theorem \ref{thm:rep} and Theorem \ref{thm:rep1}.
To obtain optimal repair bandwidth for parity columns, we can transform $\mathsf{EVENODD}_{\lceil \frac{k}{d-k+1}\rceil}$ code $\lceil \frac{r}{d-k+1}\rceil$ times into $\mathsf{EVENODD}_{\lceil \frac{k}{d-k+1}\rceil+\lceil \frac{r}{d-k+1}\rceil}$ code by the first transformation. In $\mathsf{EVENODD}_{\lceil \frac{k}{d-k+1}\rceil+\lceil \frac{r}{d-k+1}\rceil}$ code, each column has $(d-k+1)^{\lceil \frac{k}{d-k+1}\rceil+\lceil \frac{r}{d-k+1}\rceil}$ polynomials and we can repair each column by downloading $(d-k+1)^{\lceil \frac{k}{d-k+1}\rceil+\lceil \frac{r}{d-k+1}\rceil-1}$ polynomials from each of the chosen $d$ columns. Therefore, $\mathsf{EVENODD}_{\lceil \frac{k}{d-k+1}\rceil+\lceil \frac{r}{d-k+1}\rceil}$ code has optimal repair bandwidth for any column.

\subsection{Repair Algorithm}

\begin{algorithm}
    \caption{Algorithm of repairing a single failed column $f$, where $0\leq f\leq k+r-1$.}
    \label{alg:A1}
\begin{algorithmic}[1]
\STATE {The column $f$ is failed.}
    \IF {$f\in\{0, 1,\ldots, k-1\}$, denote $f=(d-k+1)m_f+r_f$, where $m_f,r_f$ are two integers with $0\leq m_f$ and $0\leq r_f \leq d-k$.}
            \STATE {Repair the polynomials in column $f$ by downloading $a_{h_1,\ell}(x),a_{h_2,\ell}(x),\ldots,a_{h_d,\ell}(x)$, for $\ell\bmod (d-k+1)^{m_f+1}\in \{r_f\cdot (d-k+1)^{m_f},r_f\cdot (d-k+1)^{m_f}+1,\ldots,(r_f+1)\cdot (d-k+1)^{m_f}-1\}$, where $\{h_1,h_2,\ldots,h_{d-k}\}=\{f-m_f,f-m_f+1,\ldots,f-1,f+1,\ldots,f-m_f+d-k\}$ and columns $h_{d-k+1},\ldots,h_{d}$ are chosen as follows. For $i=d-k+1,\ldots,d$, if $h_i$ belongs to a partition $\ell$ with $\ell>m_f+1$, then all $d-k+1$ columns of the partition $\ell$ are in $\{h_{d-k+1},\ldots,h_{d}\}$.
}
    \ENDIF
            \RETURN
     \IF {$f\in\{k,k+1\ldots,k+r-1\}$, denote $f-k=(d-k+1)m_f+r_f$, where $m_f,r_f$ are two integers with $0\leq m_f$ and $0\leq r_f \leq d-k$.}
            \STATE {Repair the polynomials in column $f$ by downloading $a_{h_1,\ell}(x),a_{h_2,\ell}(x),\ldots,a_{h_d,\ell}(x)$, for $\ell\bmod (d-k+1)^{\lceil \frac{k}{d-k+1} \rceil + m_f+1}\in \{r_f\cdot (d-k+1)^{\lceil \frac{k}{d-k+1} \rceil+m_f},r_f\cdot (d-k+1)^{\lceil \frac{k}{d-k+1} \rceil+m_f}+1,\ldots,(r_f+1)\cdot (d-k+1)^{\lceil \frac{k}{d-k+1} \rceil+m_f}-1\}$, where $\{h_1,h_2,\ldots,h_{d-k}\}=\{f-m_f,f-m_f+1,\ldots,f-1,f+1,\ldots,f-m_f+d-k\}$ and columns $h_{d-k+1},\ldots,h_{d}$ are chosen as follows. For $i=d-k+1,\ldots,d$, if $h_i$ belongs to a partition $\ell$ with $\ell>m_f+1$, then all $d-k+1$ columns of the partition $\ell$ are in $\{h_{d-k+1},\ldots,h_{d}\}$.}
    \ENDIF
            \RETURN
\end{algorithmic}
\end{algorithm}

In the following, we present the repair algorithm for a single column failure
that is stated in Algorithm~\ref{alg:A1}. Note that there is a requirement
when choosing the $d$ helper columns in Algorithm~\ref{alg:A1}. We show in the
next lemma that we can always choose the $d$ helper columns that satisfy the
requirement for all $f$.

\begin{lemma}
For $f=0,1,\ldots,k+r-1$, column $f$ belongs to partition $m_f+1$. The first $d-k$ helper columns are chosen to be the other surviving columns of partition $m_f+1$, and the other $k$ helper columns $h_{i}$ for $i=d-k+1,\ldots,d$ satisfy that if $h_i$ belongs to a partition $\ell$ with $\ell>m_f+1$, then all $d-k+1$ columns of the partition $\ell$ are in $\{h_{d-k+1},\ldots,h_{d}\}$.
\label{lm:alg1}
\end{lemma}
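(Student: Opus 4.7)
The plan is to construct the helper set explicitly and verify the stated constraints. Set $t=d-k+1$, and write $a=\lceil k/t\rceil$ and $b=\lceil r/t\rceil$, so there are $a+b$ partitions in total. Since partition $m_f+1$ contains $t$ columns, the other $t-1=d-k$ columns of that partition immediately furnish $h_1,\ldots,h_{d-k}$ without any further choice. What remains is to produce the $k$ additional helpers $h_{d-k+1},\ldots,h_d$ from the remaining $a+b-1$ partitions while obeying the all-or-none rule on every partition whose index exceeds $m_f+1$.

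Call the $m_f$ partitions indexed $1,\ldots,m_f$ the \emph{before-partitions} (from which any subset of columns may be drawn) and the $a+b-m_f-1$ partitions indexed $m_f+2,\ldots,a+b$ the \emph{after-partitions} (each of which must be taken wholly or omitted). Assuming first that $k=at$ and $r=bt$ so that all partitions are disjoint and of size exactly $t$, any valid selection corresponds to a pair $(q,s)$, where $q\in\{0,1,\ldots,a+b-m_f-1\}$ counts the after-partitions taken wholly and $s\in\{0,1,\ldots,m_f t\}$ counts the columns drawn from the before-partitions, subject to $qt+s=k$.

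I then carry out a two-case existence argument for $(q,s)$. If $m_f\leq a-1$, I set $q=a-m_f$ and $s=m_f t$; the identity $qt+s=at=k$ is immediate, and the upper bound $q\leq a+b-m_f-1$ reduces to $b\geq 1$, which is guaranteed by $r\geq 2$. If $m_f\geq a$ (so the failed column is a parity column), I set $q=0$ and $s=k=at$; the constraint $s\leq m_f t$ becomes $a\leq m_f$, which is the defining hypothesis of the case, and $q\leq a+b-m_f-1$ reduces to $m_f\leq a+b-1$, which holds since there are only $a+b$ partitions in total. In either case the prescribed $(q,s)$ is realised by picking any $s$ columns from the before-partitions together with any $q$ of the after-partitions to include wholly, yielding a helper set satisfying the lemma.

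The main obstacle is the bookkeeping for the non-disjoint case in which $k$ or $r$ is not a multiple of $t$ and the last information or parity partition overlaps its predecessor. Here I invoke the convention $m_f=\lfloor f/t\rfloor$ (and its parity analogue) from Algorithm~\ref{alg:A1}, under which every column is assigned to a unique partition, and re-run the same $(q,s)$ argument with the before- and after-partition column counts replaced by their actual values. Since any overlap only reduces the number of distinct columns contributed by the overlapping partitions by a fixed amount strictly less than $t$, the prescribed $(q,s)$ from the two cases above either remains feasible directly or can be corrected by decrementing $s$ while incrementing $q$ by one (or vice versa) without violating either $0\leq s\leq m_f t$ or $0\leq q\leq a+b-m_f-1$, completing the argument.
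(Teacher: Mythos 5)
Your construction is essentially the paper's own proof: you exhibit the helper set explicitly by taking the surviving columns of the failed column's partition, all columns of some whole partitions of index greater than $m_f+1$, and individual columns from partitions of index at most $m_f$, governed by the same counting identity ($qt+s=k$ in your notation, $(d-k+1)\alpha+\beta=k$ in the paper's), with the same case split on whether the failed column is an information or a parity column. The only place you are looser than the paper is the case where $k$ is not a multiple of $d-k+1$ and the last two information partitions overlap: the paper names the specific partitions to include there (e.g., information partitions $\lceil k/(d-k+1)\rceil-1$ and $\lceil k/(d-k+1)\rceil$ together when $m_f=0$), whereas you assert a $(q,s)$ adjustment without verifying the distinct-column count against the overlap; but neither argument is more than a sketch at that point, so the two approaches coincide.
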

\begin{proof}
We first consider the information failure, i.e., $0\leq f\leq k-1$. The case
of $k\leq f\leq k+r-1$ can be proven similarly.

If $k$ is a multiple of $d-k+1$, then we can choose $k$ helper columns $h_{i}$ for $i=d-k+1,\ldots,d$ as all the columns of any $k/(d-k+1)$ partitions, except partition $m_f+1$. If $k$ is not a multiple of $d-k+1$, we can divide the proof into two cases: $m_f=0$ and $m_f>0$. When $m_f=0$, We can choose $k$ helper columns $h_{i}$ for $i=d-k+1,\ldots,d$ as all the columns of information partitions $\lceil \frac{k}{d-k+1}\rceil-1$ and $\lceil \frac{k}{d-k+1}\rceil$, and any other $\lceil \frac{k}{d-k+1}\rceil-2$ partitions except partition $m_f+1$. When $m_f>0$, we can choose $k$ helper columns $h_{i}$ for $i=d-k+1,\ldots,d$ as all the columns of $\alpha$ partitions except partition $m_f+1$ and $\beta$ columns that belong to information partition $\ell$ with $1\leq \ell\leq m_f$, where $(d-k+1)\alpha+\beta=k$. This completes the proof.
\end{proof}

We first consider the repair algorithm of information column $f$, i.e., $0\leq f\leq k-1$. There exist two integers $m_f$ and $r_f$ such that $f=(d-k+1)m_f+r_f$, where $0\leq m_f$ and $0\leq r_f \leq d-k$.

Note that $\mathsf{EVENODD}_{\lceil \frac{k}{d-k+1}\rceil+\lceil \frac{r}{d-k+1}\rceil}$ code is transformed from EVENODD code for $\lceil \frac{k}{d-k+1}\rceil+\lceil \frac{r}{d-k+1}\rceil$ times. The optimal repair of columns in partition $i$ is enabled by the $i$-th transformation, where $i=1,2,\ldots,\lceil \frac{k}{d-k+1}\rceil+\lceil \frac{r}{d-k+1}\rceil$. According to Theorem \ref{thm:rep1}, the optimal repair property of columns in partition $i$ of $\mathsf{EVENODD}_{i}$ is preserved in $\mathsf{EVENODD}_{i+1}$ (also in $\mathsf{EVENODD}_{\lceil \frac{k}{d-k+1}\rceil+\lceil \frac{r}{d-k+1}\rceil}$) for $i=1,2,\ldots,\lceil \frac{k}{d-k+1}\rceil-1$ if either all $d-k+1$ columns of partition $i+1$ are chosen as helper columns or all $d-k+1$ columns of partition $i+1$ are not chosen as helper columns. In addition, the other $d-k$ surviving columns of partition $i$ are required to recover the failed column in partition $i$. Therefore, the $d$ helper columns of the failed column $f$ are comprised of $d-k$ columns in partition $m_f+1$, and other $k$ columns. If a column of partition $\ell$ with $\ell>m_f+1$ is chosen as helper column, then all $d-k+1$ columns of partition $\ell$ are chosen as the helper columns. By Lemma \ref{lm:alg1}, we can always find the $d$ helper columns that satisfy the requirement in Algorithm \ref{alg:A1}.

We can recover column $f$ of $\mathsf{EVENODD}_{m_f+1}$ by downloading polynomials $a_{h_1,\ell}(x),a_{h_2,\ell}(x),\ldots,a_{h_d,\ell}(x)$ from the chosen $d$ helper columns, for $\ell \in \{r_f\cdot (d-k+1)^{m_f},r_f\cdot (d-k+1)^{m_f}+1,\ldots,2r_f\cdot (d-k+1)^{m_f}-1\}$.
Because the optimal repair algorithm of column $f$ of $\mathsf{EVENODD}_{m_f+1}$ is preserved in $\mathsf{EVENODD}_{\lceil \frac{k}{d-k+1}\rceil+\lceil \frac{r}{d-k+1}\rceil}$ and the number of polynomials of $\mathsf{EVENODD}_{\lceil \frac{k}{d-k+1}\rceil+\lceil \frac{r}{d-k+1}\rceil}$ is extended to be $(d-k+1)^{\lceil \frac{k}{d-k+1}\rceil+\lceil \frac{r}{d-k+1}\rceil}$ that is a multiple of $(d-k+1)^{m_f+1}$ (the number of polynomials of $\mathsf{EVENODD}_{m_f+1}$). Therefore, we can recover column $f$ by step 3 in Algorithm~\ref{alg:A1}. It can be counted that the number of polynomials that are downloaded to recover column $f$ is $d(d-k+1)^{\lceil \frac{k}{d-k+1}\rceil+\lceil \frac{r}{d-k+1}\rceil}$, which is optimal by \eqref{eq:rep}.

When $f=k,k+1,\ldots,k+r-1$, the repair algorithm of column $f$ is similar to the repair algorithm of an information column. The only difference is that the optimal repair property of parity column is enabled by the first transformation, while the optimal repair property of information column is enabled by the second transformation.

\subsection{Decoding Method}
We present the decoding method of any $\rho\leq r$ erasures for $\mathsf{EVENODD}_{\lceil \frac{k}{d-k+1}\rceil+\lceil \frac{r}{d-k+1}\rceil}$ code. Suppose that $\gamma$ information columns $a_{1},\ldots,a_{\gamma}$ and $\delta$ parity columns $b_{1},\ldots,b_{\delta}$ are erased with $0\leq a_{1}<\ldots<a_{\gamma}\leq k-1$ and $0\leq b_{1}<\ldots<b_{\delta}\leq r-1$, where $k> \gamma > 0$, $r\geq \delta \geq 0$ and $\gamma+\delta=\rho\leq r$. Let $$\mathcal{A}:=\{0,1,\ldots,k-1\}\setminus \{a_{1},a_{2},\ldots,a_{\gamma}\}$$
be a set of indices of the available information columns, and let
$$\mathcal{B}:=\{0,1,\ldots,r-1\}\setminus \{b_{1},b_{2},\ldots,b_{\delta}\}$$
be a set of indices of the available parity columns.
We want to first recover the lost information columns by reading $k-\gamma$ information columns with indices $s_1,s_2,\ldots, s_{k-\gamma}\in\mathcal{A}$, and $\gamma$ parity columns with indices $c_1, c_2,\ldots,c_\gamma \in \mathcal{B}$, and then recover the failure parity column by multiplying the corresponding encoding vector and the information polynomials.

In the construction, $k+r$ columns are divided into $\lceil \frac{k}{d-k+1}\rceil+\lceil \frac{r}{d-k+1}\rceil$ partitions with each partition contains $d-k+1$ columns. Columns in each partition are enabled optimal repair access by recursively applying a transformation for each partition. In the decoding procedure, we need to first return to the original EVENODD codes and then decode the failed information polynomials for some rows recursively. The decoding procedure is briefly described as follows.

For $i=1,2,\ldots,\gamma$, there exist two integers $m_{c_i}$ and $r_{c_i}$ such that $c_i=(d-k+1)m_{c_i}+r_{c_i}$, where $0\leq m_{c_i}$ and $0\leq r_{c_i} \leq d-k$. Similarly, we have $s_i=(d-k+1)m_{s_i}+r_{s_i}$ for $i=1,2,\ldots,k-\gamma$, where $0\leq m_{s_i}$ and $0\leq r_{s_i} \leq d-k$.

We consider the case that $m_{c_1}\neq m_{c_2}\neq \cdots \neq m_{c_\gamma}$.
The parity polynomials are either linear combinations of the corresponding
information polynomials or  the summations of some linear combinations of the
information polynomials. According to Theorem \ref{thm:mds}, there exists at
least one row of the array codes, of which we can first obtain $\gamma$
syndrome polynomials by subtracting $(k-\gamma)(d-k+1)$ information
polynomials from $\gamma$ parity polynomials and then solve the $\gamma$ failed information polynomials by computing the $\gamma\times \gamma$ linear equations of the $\gamma$ syndrome polynomials.
Finally, we can recover the failed information polynomials in other rows recursively by subtracting the known information polynomials from the chosen parity polynomials.

For $1\leq i < j\leq \gamma$, if $m_{c_i}= m_{c_j}$, then we can obtain two parity polynomials of $\mathsf{EVENODD}_{\lceil \frac{k}{d-k+1}\rceil}$ according to the remark at the end of Section \ref{sec:first-trans}. After solving the parity polynomials of $\mathsf{EVENODD}_{\lceil \frac{k}{d-k+1}\rceil+1}$ for all $i,j$ with $m_{c_i}= m_{c_j}$, then we can find at least $(d-k+1)^{\lceil \frac{k}{d-k+1} \rceil + m_{c_1}}$ rows such that all the $\gamma$ parity polynomials in the each of the chosen rows are parity polynomials of $\mathsf{EVENODD}_{\lceil \frac{k}{d-k+1}\rceil}$. By the similar decoding procedure of  $m_{c_1}\neq m_{c_2}\neq \cdots \neq m_{c_\gamma}$, all the failed information polynomials can be solved.

\subsection{Example}

\begin{table*}[tbh]
\caption{Systematic transformed $\mathsf{EVENODD}_{2+1}$ code with $k=4$, $r=2$, $d=5$ and $e=1$.}
\begin{center}
\scriptsize
\begin{tabular}{|c|c|c|c|c|}
\hline
\multicolumn{4}{|c|}{Information columns}     & \multicolumn{1}{l|}{Parity column 0}  \\
\hline
$a_{0,0}(x)$& $a_{1,0}(x)$ & $a_{2,0}(x)$& $a_{3,0}(x)$ & $a_{0,0}(x)+(x^{p-1}a_{1,0}(x)+x^{p-1}a_{0,1}(x))+a_{2,0}(x)+(x^{p-1}a_{3,0}(x)+x^{p-1}a_{2,2}(x))$ \\
\hline
$a_{0,1}(x)$& $a_{1,1}(x)$ & $a_{2,1}(x)$& $a_{3,1}(x)$ & $x^{p-1}a_{0,1}(x)+(1+x^{p-1})a_{1,0}(x)+a_{1,1}(x)+a_{2,1}(x)+x^{p-1}a_{3,1}(x)+x^{p-1}a_{2,3}(x)$ \\
\hline
$a_{0,2}(x)$& $a_{1,2}(x)$ & $a_{2,2}(x)$& $a_{3,2}(x)$ & $a_{0,2}(x)+(x^{p-1}a_{1,2}(x)+x^{p-1}a_{0,3}(x))+x^{p-1}a_{2,2}(x)+(1+x^{p-1})a_{3,0}(x)+a_{3,2}(x)$ \\
\hline
$a_{0,3}(x)$& $a_{1,3}(x)$ & $a_{2,3}(x)$& $a_{3,3}(x)$ & $x^{p-1}a_{0,3}(x)+(1+x^{p-1})a_{1,2}(x)+a_{1,3}(x)+x^{p-1}a_{2,3}(x)+(1+x^{p-1})a_{3,1}(x)+a_{3,3}(x)$ \\
\hline
$a_{0,4}(x)$& $a_{1,4}(x)$ & $a_{2,4}(x)$& $a_{3,4}(x)$ & $a_{0,4}(x)+(x^{p-1}a_{1,4}(x)+x^{p-1}a_{0,5}(x))+a_{2,4}(x)+(x^{p-1}a_{3,4}(x)+x^{p-1}a_{2,6}(x))+$ \\
& & & & $(1+x)(a_{0,0}(x)+a_{1,0}(x)+a_{0,1}(x)+x^2a_{2,0}(x)+(x^{2}a_{3,0}(x)+x^{2}a_{2,2}(x)))$\\
\hline
$a_{0,5}(x)$& $a_{1,5}(x)$ & $a_{2,5}(x)$& $a_{3,5}(x)$ & $x^{p-1}a_{0,5}(x)+(1+x^{p-1})a_{1,4}(x)+a_{1,5}(x)+a_{2,5}(x)+x^{p-1}a_{3,5}(x)+x^{p-1}a_{2,7}(x)+$ \\
& & & & $(1+x)(x^{p-1}a_{0,1}(x)+(1+x^{p-1})a_{1,0}(x)+xa_{1,1}(x)+x^2a_{2,1}(x)+x^{2}a_{3,1}(x)+x^{2}a_{2,3}(x))$ \\
\hline
$a_{0,6}(x)$& $a_{1,6}(x)$ & $a_{2,6}(x)$& $a_{3,6}(x)$ & $a_{0,6}(x)+(x^{p-1}a_{1,6}(x)+x^{p-1}a_{0,7}(x))+x^{p-1}a_{2,6}(x)+(1+x^{p-1})a_{3,4}(x)+a_{3,6}(x)+$ \\
& & & & $(1+x)(a_{0,2}(x)+(a_{1,2}(x)+a_{0,3}(x))+xa_{2,2}(x)+(x+x^{2})a_{3,0}(x)+x^3a_{3,2}(x))$\\
\hline
$a_{0,7}(x)$& $a_{1,7}(x)$ & $a_{2,7}(x)$& $a_{3,7}(x)$ & $x^{p-1}a_{0,7}(x)+(1+x^{p-1})a_{1,6}(x)+a_{1,7}(x)+x^{p-1}a_{2,7}(x)+(1+x^{p-1})a_{3,5}(x)+a_{3,7}(x)+$ \\
& & & & $(1+x)(x^{p-1}a_{0,3}(x)+(1+x^{p-1})a_{1,2}(x)+xa_{1,3}(x)+xa_{2,3}(x)+(x+x^{2})a_{3,1}(x)+x^3a_{3,3}(x))$\\
\hline
\end{tabular}
\begin{tabular}{|c|}
\hline
Parity column 1\\
\hline
 $a_{5,0}(x)=a_{0,0}(x)+a_{1,0}(x)+a_{0,1}(x)+x^2a_{2,0}(x)+(x^{2}a_{3,0}(x)+x^{2}a_{2,2}(x))+$ \\
 $a_{0,4}(x)+(x^{p-1}a_{1,4}(x)+x^{p-1}a_{0,5}(x))+a_{2,4}(x)+(x^{p-1}a_{3,4}(x)+x^{p-1}a_{2,6}(x))$\\
\hline
  $a_{5,1}(x)=x^{p-1}a_{0,1}(x)+(1+x^{p-1})a_{1,0}(x)+xa_{1,1}(x)+x^2a_{2,1}(x)+x^{2}a_{3,1}(x)+x^{2}a_{2,3}(x)+$\\
  $x^{p-1}a_{0,5}(x)+(1+x^{p-1})a_{1,4}(x)+a_{1,5}(x)+a_{2,5}(x)+x^{p-1}a_{3,5}(x)+x^{p-1}a_{2,7}(x)$ \\
\hline
  $a_{5,2}(x)=a_{0,2}(x)+(a_{1,2}(x)+a_{0,3}(x))+xa_{2,2}(x)+(x+x^{2})a_{3,0}(x)+x^3a_{3,2}(x)+$ \\
  $a_{0,6}(x)+(x^{p-1}a_{1,6}(x)+x^{p-1}a_{0,7}(x))+x^{p-1}a_{2,6}(x)+(1+x^{p-1})a_{3,4}(x)+a_{3,6}(x)$  \\
\hline
$a_{5,3}(x)=x^{p-1}a_{0,3}(x)+(1+x^{p-1})a_{1,2}(x)+xa_{1,3}(x)+xa_{2,3}(x)+(x+x^{2})a_{3,1}(x)+x^3a_{3,3}(x)+$ \\
$x^{p-1}a_{0,7}(x)+(1+x^{p-1})a_{1,6}(x)+a_{1,7}(x)+x^{p-1}a_{2,7}(x)+(1+x^{p-1})a_{3,5}(x)+a_{3,7}(x)$\\
\hline
 $a_{5,4}(x)=a_{0,4}(x)+a_{1,4}(x)+a_{0,5}(x)+x^2a_{2,4}(x)+(x^{2}a_{3,4}(x)+x^{2}a_{2,6}(x))$ \\
\hline
  $a_{5,5}(x)=x^{p-1}a_{0,5}(x)+(1+x^{p-1})a_{1,4}(x)+xa_{1,5}(x)+x^2a_{2,5}(x)+x^{2}a_{3,5}(x)+x^{2}a_{2,7}(x)$\\
\hline
  $a_{5,6}(x)=a_{0,6}(x)+(a_{1,6}(x)+a_{0,7}(x))+xa_{2,6}(x)+(x+x^{2})a_{3,4}(x)+x^3a_{3,6}(x)$ \\
\hline
$a_{5,7}(x)=x^{p-1}a_{0,7}(x)+(1+x^{p-1})a_{1,6}(x)+xa_{1,7}(x)+xa_{2,7}(x)+(x+x^{2})a_{3,5}(x)+x^3a_{3,7}(x)$ \\
\hline
\end{tabular}
\end{center}
\label{table:first3}
\end{table*}

We present an example of $k=4$, $r=2$, $d=5$ and $e=1$ to illustrate the main ideas. Table \ref{table:A} shows the systematic transformed $\mathsf{EVENODD}_1$ code and Table \ref{table:first2} shows the systematic transformed $\mathsf{EVENODD}_2$ code. While the systematic transformed $\mathsf{EVENODD}_{2+1}$ code is shown in Table \ref{table:first3}, and we focus on the transformed $\mathsf{EVENODD}_{2+1}$ code in the following.

\subsubsection{Decoding Procedure}
We claim that we can recover all the information polynomials from any four columns. From the first four columns, we can obtain all the information polynomials directly. Suppose that the data collector connects to three information columns and one parity column, say columns 0, 1, 2 and 5. From columns 0, 1 and 2, one can download information polynomials $a_{0,\ell}(x),a_{1,\ell}(x),a_{2,\ell}(x)$ for $\ell=0,1,\ldots,7$ directly. By subtracting the downloaded information polynomials from the parity polynomials $a_{5,\ell}(x)$, we can obtain the following 8 polynomials
\begin{align*}
&x^2a_{3,0}(x)+x^{p-1}a_{3,4}(x),x^2a_{3,1}(x)+x^{p-1}a_{3,5}(x),\\
&(x+x^2)a_{3,0}(x)+x^3a_{3,2}(x)+(1+x^{p-1})a_{3,4}(x)+a_{3,6}(x),\\
&(x+x^2)a_{3,1}(x)+x^3a_{3,3}(x)+(1+x^{p-1})a_{3,5}(x)+a_{3,7}(x),\\
&x^2a_{3,4}(x),x^2a_{3,5}(x),(x+x^2)a_{3,4}(x)+x^3a_{3,6}(x),\\
&(x+x^2)a_{3,5}(x)+x^3a_{3,7}(x).
\end{align*}
It is easy to recover the information polynomials $a_{3,\ell}(x)$ for $\ell=0,1,\ldots,7$ from the above polynomials. The decoding of any three information columns and the first parity column is similar.

Suppose that we want to decode the information polynomials from two information columns and two parity columns, say columns 0, 2, 4 and 5. Denote
\begin{align*}
b_0(x)=&a_{0,4}(x)+(x^{p-1}a_{1,4}(x)+x^{p-1}a_{0,5}(x))+a_{2,4}(x)+\\
&(x^{p-1}a_{3,4}(x)+x^{p-1}a_{2,6}(x)),\\
b_1(x)=&x^{p-1}a_{0,5}(x)+(1+x^{p-1})a_{1,4}(x)+a_{1,5}(x)+\\
&a_{2,5}(x)+x^{p-1}a_{3,5}(x)+x^{p-1}a_{2,7}(x),\\
b_2(x)=&a_{0,6}(x)+(x^{p-1}a_{1,6}(x)+x^{p-1}a_{0,7}(x))+\\
&x^{p-1}a_{2,6}(x)+(1+x^{p-1})a_{3,4}(x)+a_{3,6}(x),\\
b_3(x)=&x^{p-1}a_{0,7}(x)+(1+x^{p-1})a_{1,6}(x)+a_{1,7}(x)+\\
&x^{p-1}a_{2,7}(x)+(1+x^{p-1})a_{3,5}(x)+a_{3,7}(x),
\end{align*}
and
\begin{align*}
c_0(x)=&a_{0,0}(x)+a_{1,0}(x)+a_{0,1}(x)+x^2a_{2,0}(x)+\\
&(x^{2}a_{3,0}(x)+x^{2}a_{2,2}(x)),\\
c_1(x)=&x^{p-1}a_{0,1}(x)+(1+x^{p-1})a_{1,0}(x)+xa_{1,1}(x)+\\
&x^2a_{2,1}(x)+x^{2}a_{3,1}(x)+x^{2}a_{2,3}(x),\\
c_2(x)=&a_{0,2}(x)+(a_{1,2}(x)+a_{0,3}(x))+xa_{2,2}(x)+\\
&(x+x^{2})a_{3,0}(x)+x^3a_{3,2}(x)\\
c_3(x)=&x^{p-1}a_{0,3}(x)+(1+x^{p-1})a_{1,2}(x)+xa_{1,3}(x)+\\
&xa_{2,3}(x)+(x+x^{2})a_{3,1}(x)+x^3a_{3,3}(x).
\end{align*}
We have that
\begin{align*}
a_{4,4}(x)=b_{0}(x)+(1+x)c_0(x),a_{5,0}(x)=b_{0}(x)+c_0(x),\\
a_{4,5}(x)=b_{1}(x)+(1+x)c_1(x),a_{5,1}(x)=b_{1}(x)+c_1(x),\\
a_{4,6}(x)=b_{2}(x)+(1+x)c_2(x),a_{5,2}(x)=b_{2}(x)+c_2(x),\\
a_{4,7}(x)=b_{3}(x)+(1+x)c_3(x),a_{5,3}(x)=b_{3}(x)+c_3(x).
\end{align*}
Therefore, we can solve $c_{\ell}(x)$ by
\begin{align*}
c_{\ell}(x)=x^{p-1}(a_{4,4+\ell}(x)+a_{5,\ell}(x))
\end{align*}
and $b_{\ell}(x)$ by $c_{\ell}(x)+a_{5,\ell}(x)$ for $\ell=0,1,2,3$.
Then, we can subtract the information polynomials $a_{0,\ell}(x),a_{2,\ell}(x)$ for $\ell=0,1,\ldots,7$ from the polynomials $a_{4,\ell}(x),a_{5,4+\ell}(x)$, $b_{\ell}(x),c_{\ell}(x)$ for $\ell=0,1,2,3$, and obtain the polynomials
\begin{align*}
&p_0(x)=x^{p-1}a_{1,0}(x)+x^{p-1}a_{3,0}(x),\\
&p_1(x)=a_{1,0}(x)+x^2a_{3,0}(x),\\
&p_2(x)=(1+x^{p-1})a_{1,0}(x)+a_{1,1}(x)+x^{p-1}a_{3,1}(x),\\
&p_3(x)=(1+x^{p-1})a_{1,0}(x)+xa_{1,1}(x)+x^2a_{3,1}(x),\\
&p_4(x)=x^{p-1}a_{1,2}(x)+(1+x^{p-1})a_{3,0}(x)+a_{3,2}(x), \\
&p_5(x)=a_{1,2}(x)+(x+x^2)a_{3,0}(x)+x^3a_{3,2}(x),\\
&p_6(x)=(1+x^{p-1})a_{1,2}(x)+a_{1,3}(x)+(1+x^{p-1})a_{3,1}(x)+a_{3,3}(x),\\
&p_7(x)=(1+x^{p-1})a_{1,2}(x)+xa_{1,3}(x)+(x+x^2)a_{3,1}(x)+x^3a_{3,3}(x),\\
&p_8(x)=a_{1,4}(x)+x^2a_{3,4}(x),\\
&p_9(x)=x^{p-1}a_{1,4}(x)+x^{p-1}a_{3,4}(x),\\
&p_{10}(x)=(1+x^{p-1})a_{1,4}(x)+xa_{1,5}(x)+x^2a_{3,5}(x),\\
&p_{11}(x)=(1+x^{p-1})a_{1,4}(x)+a_{1,5}(x)+x^{p-1}a_{3,5}(x),\\
&p_{12}(x)=x^{p-1}a_{1,6}(x)+(1+x^{p-1})a_{3,4}(x)+a_{3,6}(x),\\
&p_{13}(x)=a_{1,6}(x)+(x+x^{2})a_{3,4}(x)+x^3a_{3,6}(x),\\
&p_{14}(x)=(1+x^{p-1})a_{1,6}(x)+a_{1,7}+(1+x^{p-1})a_{3,5}(x)+a_{3,7}(x),\\
&p_{15}(x)=(1+x^{p-1})a_{1,6}(x)+xa_{1,7}+(x+x^{2})a_{3,5}(x)+x^3a_{3,7}(x).
\end{align*}
We can first compute
\[
a_{3,0}(x)=\frac{p_0(x)+x^{p-1}p_1(x)}{x+x^{p-1}}
\]
and then compute $a_{1,0}(x)=p_1(x)+x^2a_{3,0}(x)$. The other polynomials $$a_{1,1}(x),a_{3,1}(x);a_{1,2}(x),a_{3,2}(x);a_{1,3}(x),a_{3,3}(x)$$
can be computed by recursively subtracting the known polynomials from
$$p_{2}(x),p_{3}(x);p_{4}(x),p_{5}(x);p_{6}(x),p_{7}(x).$$
Similarly, polynomials
\begin{small}
\begin{align*}
a_{1,4}(x),a_{3,4}(x);a_{1,5}(x),a_{3,5}(x);a_{1,6}(x),a_{3,6}(x);a_{1,7}(x),a_{3,7}(x)
\end{align*}
\end{small}
can be computed from
$$p_{8}(x),p_{9}(x);p_{10}(x),p_{11}(x);p_{12}(x),p_{13}(x);p_{14}(x),p_{15}(x).$$

\subsubsection{Repair Procedure}

Next we show that any one column can be recovered by Algorithm \ref{alg:A1} with optimal repair bandwidth. Suppose that column 4 (parity column 0) is failed, i.e., $f=4$. As $k=4$, $r=2$ and $d=5$, we have $m_f=0$ and $r_f=0$ and all the surviving five columns are selected as helper columns. By step 6 of Algorithm \ref{alg:A1}, we need to download polynomials $$a_{0,\ell}(x),a_{1,\ell}(x),a_{2,\ell}(x),a_{3,\ell}(x),a_{5,\ell}(x)$$ from columns $0,1,2,3,5$ for $\ell=0,1,2,3$ to recover column 4. First, we can directly compute the following four parity polynomials
\begin{align*}
a_{4,0}(x)=&a_{0,0}(x)+(x^{p-1}a_{1,0}(x)+x^{p-1}a_{0,1}(x))+\\
&a_{2,0}(x)+(x^{p-1}a_{3,0}(x)+x^{p-1}a_{2,2}(x)), \\
a_{4,1}(x)=&x^{p-1}a_{0,1}(x)+(1+x^{p-1})a_{1,0}(x)+a_{1,1}(x)+\\
&a_{2,1}(x)+x^{p-1}a_{3,1}(x)+x^{p-1}a_{2,3}(x), \\
a_{4,2}(x)=&a_{0,2}(x)+(x^{p-1}a_{1,2}(x)+x^{p-1}a_{0,3}(x))+\\
&x^{p-1}a_{2,2}(x)+(1+x^{p-1})a_{3,0}(x)+a_{3,2}(x), \\
a_{4,3}(x)=&x^{p-1}a_{0,3}(x)+(1+x^{p-1})a_{1,2}(x)+a_{1,3}(x)+\\
&x^{p-1}a_{2,3}(x)+(1+x^{p-1})a_{3,1}(x)+a_{3,3}(x),
\end{align*}
from the downloaded information polynomials. Then, we can compute the following four polynomials
\begin{align*}
c_0(x)=&a_{0,0}(x)+a_{1,0}(x)+a_{0,1}(x)+x^2a_{2,0}(x)+\\
&(x^{2}a_{3,0}(x)+x^{2}a_{2,2}(x)),\\
c_1(x)=&x^{p-1}a_{0,1}(x)+(1+x^{p-1})a_{1,0}(x)+xa_{1,1}(x)+\\
&x^2a_{2,1}(x)+x^{2}a_{3,1}(x)+x^{2}a_{2,3}(x),\\
c_2(x)=&a_{0,2}(x)+(a_{1,2}(x)+a_{0,3}(x))+xa_{2,2}(x)+\\
&(x+x^{2})a_{3,0}(x)+x^3a_{3,2}(x)\\
c_3(x)=&x^{p-1}a_{0,3}(x)+(1+x^{p-1})a_{1,2}(x)+xa_{1,3}(x)+\\
&xa_{2,3}(x)+(x+x^{2})a_{3,1}(x)+x^3a_{3,3}(x),
\end{align*}
from the downloaded information polynomials. Lastly, we can recover
the polynomials $a_{4,4}(x)$, $a_{4,5}(x)$, $a_{4,6}(x)$ and $a_{4,7}(x)$ by computing
\begin{align*}
a_{4,4}(x)=&a_{5,0}(x)+xc_0(x),\\
a_{4,5}(x)=&a_{5,1}(x)+xc_1(x),\\
a_{4,6}(x)=&a_{5,2}(x)+xc_2(x),\\
a_{4,7}(x)=&a_{5,3}(x)+xc_3(x).
\end{align*}

It can be checked that column 5 can be recovered by downloading $a_{0,\ell}(x)$, $a_{1,\ell}(x)$, $a_{2,\ell}(x)$, $a_{3,\ell}(x)$ and $a_{4,\ell}(x)$ from columns $0,1,2,3,4$ for $\ell=4,5,6,7$ according to Algorithm \ref{alg:A1}. Similarly, we can recover column 2 and column 3 by downloading $a_{0,\ell}(x),a_{1,\ell}(x),a_{3,\ell}(x),a_{4,\ell}(x),a_{5,\ell}(x)$ for $\ell=0,2,4,6$, and $a_{0,\ell}(x),a_{1,\ell}(x),a_{2,\ell}(x),a_{4,\ell}(x),a_{5,\ell}(x)$ for $\ell=2,3,6,7$, respectively. According to Algorithm \ref{alg:A1}, column 0 and column 1 can be recovered by downloading $a_{1,\ell}(x),a_{2,\ell}(x),a_{3,\ell}(x)$, $a_{4,\ell}(x),a_{5,\ell}(x)$ for $\ell=0,2,4,6$, and $a_{0,\ell}(x),a_{2,\ell}(x),a_{3,\ell}(x),a_{4,\ell}(x),a_{5,\ell}(x)$ for $\ell=1,3,5,7$, respectively.

\section{Transformation for Other Binary MDS Array Codes}
\label{sec:trans-exm}
The transformation given in Section \ref{sec:trans} can also be employed in
other binary MDS array codes, such as RDP and codes in
\cite{blomer1999,feng2005,hou2018a,schindelhauer2013,hou2017,hou2018c,hou2018b}.

Specifically, for RDP and codes in
\cite{blomer1999,feng2005,hou2018a,schindelhauer2013}, the transformation is
similar to that of EVENODD in Section~\ref{sec:cons}. We only need to replace
the original EVENODD by the new codes and transform them
for $\lceil \frac{k}{d-k+1}\rceil+\lceil \frac{r}{d-k+1}\rceil$ times. We can
show that the multi-layer transformed codes also
have optimal repair access for all columns, as in $\mathsf{EVENODD}_{\lceil
\frac{k}{d-k+1}\rceil+\lceil \frac{r}{d-k+1}\rceil}$.

For codes with optimal repair access or asymptotically optimal repair access
only for information column, such as codes in
\cite{hou2017,hou2018c,hou2018b,gad2013,pamies2016}, we can transform them for
$\lceil \frac{r}{d-k+1}\rceil$ times to enable optimal repair access for all
$r$ parity columns and the optimal repair property of any information column is
preserved. In the following, we take an example with $k=2$, $r=2$ and $d=3$
of the construction in \cite{hou2018c} to illustrate how to apply the transformation
to obtain the transformed codes with optimal repair access for each of the two parity
columns and asymptotically optimal repair access for each of the two information columns.

\subsection{Transformation for Array Codes in \cite{hou2018c}}

The array code in \cite{hou2018c} is specified by parameters $k$, $r$, $d$, $p$ and $\tau$. Let $k=2$, $r=2$, $d=3$, $p=3$ and $\tau=4$. The array of the example is of size $8\times 4$. The first two columns are information columns that store information bits and the last two columns are parity columns that store parity bits. Let $a_{i,j}$ be the $i$-th bit in column $j$, where $i=0,1,\ldots,7$ and $j=0,1,2,3$. For $j=0,1$, we define four \emph{extra bits} $a_{8,j},a_{9,j},a_{10,j},a_{11,j}$ associated with column $j$ as
\begin{eqnarray}
a_{8,j}=& a_{0,j}+a_{4,j},\nonumber\\
a_{9,j}=& a_{1,j}+a_{5,j},\nonumber\\
a_{10,j}=& a_{2,j}+a_{6,j},\nonumber\\
a_{11,j}=&a_{3,j}+a_{7,j}.
\label{eq:exm}
\end{eqnarray}
Given the information bits, the parity bits $a_{0,2},a_{1,2},\ldots,a_{7,2}$ in column 2 are computed by
\[
a_{i,2}=a_{i,0}+a_{i,1} \text{ for } i=0,1,\ldots,7,
\]
and the parity bits $a_{0,3},a_{1,3},\ldots,a_{7,3}$ in column 3 are computed by
\[
a_{i,3}=a_{i-1,0}+a_{i-2,1} \text{ for } i=0,1,\ldots,7.
\]
Note that all the subscripts in the example are computed by modulo 12. Table \ref{table:exm} shows the example.
Similar to the information column, we also define four \emph{extra bits} $a_{8,j},a_{9,j},a_{10,j},a_{11,j}$ associated with column $j$ as in \eqref{eq:exm}. Note that we only store eight bits $a_{0,j},a_{1,j},\ldots,a_{7,j}$ in column $j$, where $j=0,1,2,3$, and we can compute the extra bits when necessary by \eqref{eq:exm}.

\begin{table}
\caption{An example of the array code in \cite{hou2018c} with $k=2$, $r=2$, $d=3$, $p=3$ and $\tau=4$.}
\begin{center}
\begin{tabular}{|c|c|c|c|}
\hline
Column 0 & Column 1  & Column 2  & Column 3  \\
\hline
$a_{0,0}$& $a_{0,1}$ &  $a_{0,2}=a_{0,0}+a_{0,1}$ & $a_{0,3}=a_{11,0}+a_{10,1}$  \\
\hline
$a_{1,0}$& $a_{1,1}$ &  $a_{1,2}=a_{1,0}+a_{1,1}$ & $a_{1,3}=a_{0,0}+a_{11,1}$  \\
\hline
$a_{2,0}$& $a_{2,1}$ &  $a_{2,2}=a_{2,0}+a_{2,1}$ & $a_{2,3}=a_{1,0}+a_{0,1}$  \\
\hline
$a_{3,0}$& $a_{3,1}$ &  $a_{3,2}=a_{3,0}+a_{3,1}$ & $a_{3,3}=a_{2,0}+a_{1,1}$  \\
\hline
$a_{4,0}$& $a_{4,1}$ &  $a_{4,2}=a_{4,0}+a_{4,1}$ & $a_{4,3}=a_{3,0}+a_{2,1}$  \\
\hline
$a_{5,0}$& $a_{5,1}$ &  $a_{5,2}=a_{5,0}+a_{5,1}$ & $a_{5,3}=a_{4,0}+a_{3,1}$  \\
\hline
$a_{6,0}$& $a_{6,1}$ &  $a_{6,2}=a_{6,0}+a_{6,1}$ & $a_{6,3}=a_{5,0}+a_{4,1}$  \\
\hline
$a_{7,0}$& $a_{7,1}$ &  $a_{7,2}=a_{7,0}+a_{7,1}$ & $a_{7,3}=a_{6,0}+a_{5,1}$  \\
\hline
\end{tabular}
\end{center}
\label{table:exm}
\end{table}

We can repair four bits $a_{0,0},a_{2,0},a_{4,0},a_{6,0}$ in column 0 by
\begin{align*}
a_{0,0}=&a_{0,1}+a_{0,2}, \text{ where } a_{0,2}=a_{0,0}+a_{0,1},\\
a_{2,0}=&a_{2,1}+a_{2,2}, \text{ where } a_{2,2}=a_{2,0}+a_{2,1},\\
a_{4,0}=&a_{4,1}+a_{4,2}, \text{ where } a_{4,2}=a_{4,0}+a_{4,1},\\
a_{6,0}=&a_{6,1}+a_{6,2}, \text{ where } a_{6,2}=a_{6,0}+a_{6,1},
\end{align*}
and the other bits $a_{1,0},a_{3,0},a_{5,0},a_{7,0}$ in column 0 by
\begin{align*}
a_{1,0}=&a_{0,1}+a_{2,3}, \text{ where } a_{2,3}=a_{1,0}+a_{0,1},\\
a_{3,0}=&a_{2,1}+a_{4,3}, \text{ where } a_{4,3}=a_{3,0}+a_{2,1},\\
a_{5,0}=&a_{4,1}+a_{6,3}, \text{ where } a_{6,3}=a_{5,0}+a_{4,1},\\
a_{7,0}=&a_{6,1}+a_{4,3}+a_{0,3}, \text{ where } a_{4,3}=a_{3,0}+a_{2,1} \\
&\text{ and } a_{0,3}=a_{11,0}+a_{10,1}.
\end{align*}
We need to download 12 bits
\[
a_{0,1},a_{2,1},a_{4,1},a_{6,1},a_{0,2},a_{2,2},a_{4,2},a_{6,2},a_{0,3},a_{2,3},a_{4,3},a_{6,3},
\]
to recover the eight bits in column 0. Therefore, the repair access of column 0 is optimal according to \eqref{eq:rep}. Column~1 can be recovered by downloading 14 bits
\begin{align*}
&a_{0,0},a_{1,0},a_{3,0},a_{4,0},a_{5,0},a_{7,0},a_{0,2},a_{1,2},\\
&a_{4,2},a_{5,2},a_{0,3},a_{1,3},a_{4,3},a_{5,3},
\end{align*}
which are two bits more than the optimal repair access.

We argue that we can recover the information bits from any two columns. We can directly obtain the information bits from column 0 and column 1. We can also obtain the information bits from any one information column and any one parity column. For example, if we want to decode the information bits from column 0 and column 2, we can subtract $a_{i,0}$ from $a_{i,2}$ to obtain $a_{i,1}$, for $i=0,1,\ldots,7$. Finally, we can decode the information bits from column 2 and column 3 as follows. We can compute $a_{7,1}$ by
\begin{align*}
a_{7,1}=a_{0,2}+a_{1,2}+a_{2,2}+a_{3,2}+a_{1,3}+a_{2,3}+a_{3,3}+a_{4,3}.
\end{align*}
Similarly, we can compute $a_{8,1},a_{9,1},a_{10,1}$ by
\begin{align*}
a_{8,1}=& a_{1,2}+a_{2,2}+a_{3,2}+a_{4,2}+a_{2,3}+a_{3,3}+a_{4,3}+a_{5,3},\\
a_{9,1}=& a_{2,2}+a_{3,2}+a_{4,2}+a_{5,2}+a_{3,3}+a_{4,3}+a_{5,3}+a_{6,3},\\
a_{10,1}=& a_{3,2}+a_{4,2}+a_{5,2}+a_{6,2}+a_{4,3}+a_{5,3}+a_{6,3}+a_{7,3}.
\end{align*}
Then, we can compute the other information bits iteratively.

Next, we show how to apply the first transformation for the above example to obtain the transformed codes that have optimal repair access for each of columns 2 and 3, while the efficient repair property of each of columns 0 and 1 is also preserved. For $j=0,1,2,3$, we can represent the eight bits in column $j$ and the four extra bits associated with column $j$ by polynomial
$$a_j(x)=a_{0,j}+a_{1,j}x+\ldots+a_{11,j}x^{11}.$$
First, we can generate two instances $a_0(x),a_1(x),a_2(x),a_3(x)$ and $b_0(x),b_1(x),b_2(x),b_3(x)$. Then, we compute polynomials $b_2(x)+x^4a_3(x),a_3(x)+b_2(x)$ over $\mathbb{F}_2[x]/(1+x^{12})$. For $j=0,1$, column $j$ stores the eight coefficients of degrees from zero to seven of the polynomials $a_j(x)$ and $b_{j}(x)$; while column 2 and column 3 stores the eight coefficients of degrees from zero to seven of the polynomials $a_2(x),b_2(x)+x^4a_3(x)$ and $a_3(x)+b_2(x),b_3(x)$, respectively. Table \ref{table:exm1} shows the transformed array codes.

\begin{table*}[!t]
\caption{The transformed array code with $k=2$, $r=2$, $d=3$, $p=3$ and $\tau=4$.}
\begin{center}
\begin{tabular}{|c|c|c|c|}
\hline
Column 0 & Column 1  & Column 2  & Column 3  \\
\hline
$a_{0,0}$& $a_{0,1}$ &  $a_{0,2}=a_{0,0}+a_{0,1}$ & $a_{0,3}+b_{0,2}=a_{11,0}+a_{10,1}+b_{0,0}+b_{0,1}$  \\
$b_{0,0}$& $b_{0,1}$ &  $b_{0,2}+a_{8,3}=b_{0,0}+b_{0,1}+a_{7,0}+a_{6,1}$ & $b_{0,3}=b_{11,0}+b_{10,1}$  \\
\hline
$a_{1,0}$& $a_{1,1}$ &  $a_{1,2}=a_{1,0}+a_{1,1}$ & $a_{1,3}+b_{1,2}=a_{0,0}+a_{11,1}+b_{1,0}+b_{1,1}$  \\
$b_{1,0}$& $b_{1,1}$ &  $b_{1,2}+a_{9,3}=b_{1,0}+b_{1,1}+a_{8,0}+a_{7,1}$ & $b_{1,3}=b_{0,0}+b_{11,1}$  \\
\hline
$a_{2,0}$& $a_{2,1}$ &  $a_{2,2}=a_{2,0}+a_{2,1}$ & $a_{2,3}+b_{2,2}=a_{1,0}+a_{0,1}+b_{2,0}+b_{2,1}$  \\
$b_{2,0}$& $b_{2,1}$ &  $b_{2,2}+a_{10,3}=b_{2,0}+b_{2,1}+a_{9,0}+a_{8,1}$ & $b_{2,3}=b_{1,0}+b_{0,1}$  \\
\hline
$a_{3,0}$& $a_{3,1}$ &  $a_{3,2}=a_{3,0}+a_{3,1}$ & $a_{3,3}+b_{3,2}=a_{2,0}+a_{1,1}+b_{3,0}+b_{3,1}$  \\
$b_{3,0}$& $b_{3,1}$ &  $b_{3,2}+a_{11,3}=b_{3,0}+b_{3,1}+a_{10,0}+a_{9,1}$ & $b_{3,3}=b_{2,0}+b_{1,1}$  \\
\hline
$a_{4,0}$& $a_{4,1}$ &  $a_{4,2}=a_{4,0}+a_{4,1}$ & $a_{4,3}+b_{4,2}=a_{3,0}+a_{2,1}+b_{4,0}+b_{4,1}$  \\
$b_{4,0}$& $b_{4,1}$ &  $b_{4,2}+a_{0,3}=b_{4,0}+b_{4,1}+a_{11,0}+a_{10,1}$ & $b_{4,3}=b_{3,0}+b_{2,1}$  \\
\hline
$a_{5,0}$& $a_{5,1}$ &  $a_{5,2}=a_{5,0}+a_{5,1}$ & $a_{5,3}+b_{5,2}=a_{4,0}+a_{3,1}+b_{5,0}+b_{5,1}$  \\
$b_{5,0}$& $b_{5,1}$ &  $b_{5,2}+a_{1,3}=b_{5,0}+b_{5,1}+a_{0,0}+a_{11,1}$ & $b_{5,3}=b_{4,0}+b_{3,1}$  \\
\hline
$a_{6,0}$& $a_{6,1}$ &  $a_{6,2}=a_{6,0}+a_{6,1}$ & $a_{6,3}+b_{6,2}=a_{5,0}+a_{4,1}+b_{6,0}+b_{6,1}$  \\
$b_{6,0}$& $b_{6,1}$ &  $b_{6,2}+a_{2,3}=b_{6,0}+b_{6,1}+a_{1,0}+a_{0,1}$ & $b_{6,3}=b_{5,0}+b_{4,1}$  \\
\hline
$a_{7,0}$& $a_{7,1}$ &  $a_{7,2}=a_{7,0}+a_{7,1}$ & $a_{7,3}+b_{7,2}=a_{6,0}+a_{5,1}+b_{7,0}+b_{7,1}$  \\
$b_{7,0}$& $b_{7,1}$ &  $b_{7,2}+a_{3,3}=b_{7,0}+b_{7,1}+a_{2,0}+a_{1,1}$ & $b_{7,3}=b_{6,0}+b_{5,1}$  \\
\hline
\end{tabular}
\end{center}
\label{table:exm1}
\end{table*}

First, we show that the efficient repair property of any one information column is preserved in the transformed array codes. Consider the repair method of column 0. We can repair column 0 by downloading 24 bits
\begin{align*}
& a_{0,1},a_{2,1},a_{4,1},a_{6,1},a_{0,2},a_{2,2},a_{4,2},a_{6,2},a_{0,3}+b_{0,2},a_{2,3}+b_{2,2},\\
&a_{4,3}+b_{4,2},a_{6,3}+b_{6,2},b_{0,1},b_{2,1},b_{4,1},b_{6,1},b_{0,2}+a_{8,3},\\
&b_{2,2}+a_{10,3},b_{4,2}+a_{0,3},b_{6,2}+a_{2,3},b_{0,3},b_{2,3},b_{4,3},b_{6,3}.
\end{align*}
Specifically, we can compute the
four bits $a_{0,0}$, $a_{2,0}$, $a_{4,0}$, $a_{6,0}$ in column 0 by
\begin{align*}
a_{0,0}=&a_{0,1}+a_{0,2}, \text{ where } a_{0,2}=a_{0,0}+a_{0,1},\\
a_{2,0}=&a_{2,1}+a_{2,2}, \text{ where } a_{2,2}=a_{2,0}+a_{2,1},\\
a_{4,0}=&a_{4,1}+a_{4,2}, \text{ where } a_{4,2}=a_{4,0}+a_{4,1},\\
a_{6,0}=&a_{6,1}+a_{6,2}, \text{ where } a_{6,2}=a_{6,0}+a_{6,1},
\end{align*}
and $a_{1,0}$, $a_{3,0}$, $a_{5,0}$, $a_{7,0}$ in column 0 by
\begin{align*}
a_{3,0}=&(b_{0,2}+a_{8,3})+(a_{0,3}+b_{0,2})+b_{2,1},\\
a_{5,0}=&(b_{2,2}+a_{10,3})+(a_{2,3}+b_{2,2})+b_{4,1},\\
a_{7,0}=&(b_{4,2}+a_{0,3})+(a_{4,3}+b_{4,2})+b_{6,1},\\
a_{1,0}=&(b_{6,2}+a_{2,3})+(a_{6,3}+b_{6,2})+a_{0,1}+a_{4,1}+a_{5,0}.
\end{align*}
Similarly, we can compute
$b_{1,0},b_{3,0},b_{5,0},b_{7,0}$ in column 0 by
\begin{align*}
b_{1,0}=&b_{0,1}+b_{2,3}, \text{ where } b_{2,3}=b_{1,0}+b_{0,1},\\
b_{3,0}=&b_{2,1}+b_{4,3}, \text{ where } b_{4,3}=b_{3,0}+b_{2,1},\\
b_{5,0}=&b_{4,1}+b_{6,3}, \text{ where } b_{6,3}=b_{5,0}+b_{4,1},\\
b_{7,0}=&b_{6,1}+b_{0,3}+b_{4,3}, \text{ where } b_{4,3}=b_{3,0}+b_{2,1} \\
&\text{ and } b_{0,3}=b_{11,0}+b_{10,1},
\end{align*}
and $b_{0,0},b_{2,0},b_{4,0},b_{6,0}$ in column 0 by
\begin{align*}
b_{0,0}=&(b_{0,2}+a_{8,3})+b_{0,1}+a_{7,0}+a_{6,1},\\
b_{2,0}=&(b_{2,2}+a_{10,3})+b_{2,1}+a_{1,0}+a_{5,0}+a_{0,1}+a_{4,1},\\
b_{4,0}=&(b_{4,2}+a_{0,3})+b_{4,1}+a_{3,0}+a_{7,0}+a_{2,1}+a_{6,1},\\
b_{6,0}=&(b_{6,2}+a_{2,3})+b_{6,1}+a_{1,0}+a_{0,1}.
\end{align*}
The repair access of column 0 is also optimal. Column~1 can be recovered by downloading 28 bits
\begin{align*}
&a_{0,0},a_{1,0},a_{3,0},a_{4,0},a_{5,0},a_{7,0},a_{0,2},a_{1,2},a_{4,2},a_{5,2},a_{0,3}+b_{0,2},\\
&a_{1,3}+b_{1,2},a_{4,3}+b_{4,2},a_{5,3}+b_{5,2},b_{0,0},b_{1,0},b_{3,0},b_{4,0},b_{5,0},b_{7,0},\\
&b_{0,2}+a_{8,3},b_{1,2}+a_{9,3},b_{4,2}+a_{0,3},b_{5,2}+a_{1,3},b_{0,3},b_{1,3},b_{4,3},b_{5,3},
\end{align*}
which are four bits more than the optimal repair access.

According to Theorem \ref{thm:rep1}, the repair access of each of column 2 and column 3 is optimal. We can repair column 2 by downloading 24 bits
\begin{align*}
&a_{0,0},a_{1,0},a_{2,0},a_{3,0},a_{4,0},a_{5,0},a_{6,0},a_{7,0},\\
&a_{0,1},a_{1,1},a_{2,1},a_{3,1},a_{4,1},a_{5,1},a_{6,1},a_{7,1},\\
&a_{0,3}+b_{0,2},a_{1,3}+b_{1,2},a_{2,3}+b_{2,2},a_{3,3}+b_{3,2},\\
&a_{4,3}+b_{4,2},a_{5,3}+b_{5,2},a_{6,3}+b_{6,2},a_{7,3}+b_{7,2},
\end{align*}
from columns 0, 1 and 3, and repair column 3 by downloading 24 bits
\begin{align*}
&b_{0,0},b_{1,0},b_{2,0},b_{3,0},b_{4,0},b_{5,0},b_{6,0},b_{7,0},\\
&b_{0,1},b_{1,1},b_{2,1},b_{3,1},b_{4,1},b_{5,1},b_{6,1},b_{7,1},\\
&b_{0,2}+a_{8,3},b_{1,2}+a_{9,3},b_{2,2}+a_{10,3},b_{3,2}+a_{11,3},\\
&b_{4,2}+a_{0,3},b_{5,2}+a_{1,3},b_{6,2}+a_{2,3},b_{7,2}+a_{3,3},
\end{align*}
from columns 0, 1 and 2.

For the MDS array codes \cite{hou2017,hou2018c,hou2018b} with general parameters $k$ and $r$, each column has $(p-1)\tau$ bits, we can choose $p$ to make the array codes satisfy the MDS property and choose $\tau$ to achieve asymptotically optimal repair bandwidth for each of the $k$ information columns. For $j=0,1,\ldots,k+r-1$, we can represent the $(p-1)\tau$ bits $a_{0,j},a_{1,j},\ldots,a_{(p-1)\tau-1,j}$ in column $j$ and $\tau$ extra bits $a_{(p-1)\tau,j},a_{(p-1)\tau+1,j},\ldots,a_{p\tau-1,j}$ associated with column $j$ by polynomial $a_j(x)=\sum_{i=0}^{p\tau-1}a_{i,j}x^{i}\in \mathbb{F}_2[x]/(1+x^{p\tau})$, where the extra bit $a_{(p-1)\tau+\mu,j}$ with $\mu=0,1,\ldots,\tau-1$ is computed by
\[
a_{(p-1)\tau+\mu,j}=a_{\mu,j}+a_{\tau+\mu,j}+\ldots+a_{(p-2)\tau+\mu,j}.
\]
If we apply the first transformation with encoding coefficient being $x^e$ for the columns from $k$ to $d$ of the MDS array codes, we can obtain the transformed codes with each column containing $d-k+1$ polynomials.
We should carefully choose the encoding coefficient in the transformation in order to make sure that the efficient repair property of any information column of original MDS array codes is preserved in the transformed MDS array codes. In the example with $k=2$, $r=2$ and $d=3$, we choose the encoding coefficient of the transformation to be $x^4$. In fact, the efficient repair property of any information column is also maintained if the encoding coefficient is any polynomial of $\{1+x^4,x^8,1+x^8,x^4+x^8,1+x^4+x^8\}$. However, the efficient repair property of any information column is not maintained if the encoding coefficient is other polynomial in $\mathbb{F}_2[x]/(1+x^{12})$.

Note that the following two properties are the essential reasons to preserve the efficient repair property. First, there is a cyclic structure in the ring $\mathbb{F}_2[x]/(1+x^{12})$. The multiplication of $x^4$ and the polynomial $a_3(x)$ in $\mathbb{F}_2[x]/(1+x^{12})$ can be implemented by cyclicly shifting $4$ positions of $a_3(x)$. Second, the exponent of the encoding coefficient of the transformation, $e=4$ is a multiple of two. Otherwise, the efficient repair property of original array codes is not maintained. In the example, we have $d=k+r-1$, i.e., all the surviving columns are connected to recover a failure column. By applying one transformation for the $r$ parity columns, the transformed array codes will have asymptotically or exactly optimal repair for any single column. However, if $d<k+r-1$, then we may need to employ the transformation for many times, as like the transformation for EVENODD codes. When we apply multiple transformations for the array codes in \cite{hou2017,hou2018c,hou2018b}, we should not only carefully choose the encoding coefficient but also the transformed columns in each transformation, in order to preserve the efficient repair property of the information column.

\subsection{Transformation for EVENODD to Preserve the Efficient Repair Property of Any Information Column}
\label{sec:trans-evenodd}
The number of symbols stored in each column or node is also referred to as the \emph{sub-packetization level}. It is important to have a low sub-packetization level for practical consideration. It is shown in \cite{tamo2014} that the lower bound of sub-packetization of optimal access MDS codes over finite field with $d=n-1$ is $r^{(k-1)/r}$. The sub-packetization of MDS code constructions over finite field with optimal repair access presented in \cite{ye2017,li2017} is $(n-k)^{\lceil \frac{n}{n-k}\rceil}$, for $d=n-1$.
$\epsilon$-minimum storage regenerating ($\epsilon$-MSR) codes are proposed in \cite{rawat2018} to reduce the sub-packetization at a cost of slightly more repair bandwidth. Existing constructions \cite{hou2018c} of binary MDS array codes with $d=k+1$ and asymptotically optimal repair access for any single information column show that the sub-packetization is strictly less than $p\cdot 2^{\frac{k}{r-1}+r-1}$ \cite[Theorem 2]{hou2018c}, where $p$ is a prime and constant number. The existing constructions of MDS codes with asymptotically or exactly optimal repair access have an exponential sub-packetization level. The construction of MDS codes with efficient repair for any column with lower sub-packetization level is attractive. In the following, we take EVENODD codes with $r=2$ as an example to show how to design new transformation to enable optimal repair for any single parity column and the repair access of any single information column is roughly $3/4$ of all the information bits, and thus the sub-packetization level is low.

Consider the example of EVENODD codes with $k=3$, $r=2$ and $p=5$. We have $k=3$ information columns and $r=2$ parity columns. Let $a_{0,j},a_{1,j},a_{2,j},a_{3,j}$ be the four bits in column $j$, where $j=0,1,2,3,4$. Table \ref{table:evenodd1} shows the example.

\begin{table*}[!t]
\caption{The EVENODD code with
$k=3$, $r=2$ and $p=5$.}
\vspace{-8pt}
\begin{center}
\begin{tabular}{|c|c|c|c|c|}
\hline
Column 0 & Column 1  & Column 2  & Column 3 & Column 4 \\
\hline
$a_{0,0}$& $a_{0,1}$ & $a_{0,2}$& $a_{0,3}=a_{0,0}+a_{0,1}+a_{0,2}$ & $a_{0,4}=a_{0,0}+a_{3,2}+(a_{3,1}+a_{2,2})$ \\
\hline
$a_{1,0}$& $a_{1,1}$ & $a_{1,2}$& $a_{1,3}=a_{1,0}+a_{1,1}+a_{1,2}$ & $a_{1,4}=a_{1,0}+a_{0,1}+(a_{3,1}+a_{2,2})$ \\
\hline
$a_{2,0}$& $a_{2,1}$ & $a_{2,2}$& $a_{2,3}=a_{2,0}+a_{2,1}+a_{2,2}$ & $a_{2,4}=a_{2,0}+a_{1,1}+a_{0,2}+(a_{3,1}+a_{2,2})$ \\
\hline
$a_{3,0}$& $a_{3,1}$ & $a_{3,2}$& $a_{3,3}=a_{3,0}+a_{3,1}+a_{3,2}$ & $a_{3,4}=a_{3,0}+a_{2,1}+a_{1,2}+(a_{3,1}+a_{2,2})$ \\
\hline
\end{tabular}
\end{center}
\label{table:evenodd1}
\end{table*}

When we say one information bit is repaired by a parity column (the first parity column or the second parity column), it means that we repair the bit by downloading the parity bit in the parity column that contains the failed information bits and all the information bits that are used to compute the downloaded parity bit except the failed information bit. For example, the bit $a_{0,0}$ is repaired by the first parity column, which means that we download the parity bit $a_{0,3}=a_{0,0}+a_{0,1}+a_{0,2}$ in the first parity column and two information bits $a_{0,1},a_{0,2}$ to recover the information bit $a_{0,0}$.
According to the repair method given in \cite{wang2010}, we can repair two information bits of the failed information column by the first parity column and the other information bits by the second parity column. Consider column 1. We can repair $a_{0,1},a_{1,1}$ by
\begin{align*}
a_{0,1}=& a_{0,0}+a_{0,2}+a_{0,3}, \text{ where } a_{0,3}=a_{0,0}+a_{0,1}+a_{0,2},\\
a_{1,1}=& a_{1,0}+a_{1,2}+a_{1,3}, \text{ where } a_{1,3}=a_{1,0}+a_{1,1}+a_{1,2},
\end{align*}
and repair $a_{2,1},a_{3,1}$ by
\begin{align*}
a_{3,1}=& a_{1,0}+a_{0,1}+a_{2,2}+a_{1,4}, \\
&\text{ where } a_{1,4}=a_{1,0}+a_{0,1}+a_{3,1}+a_{2,2},\\
a_{2,1}=& a_{3,0}+a_{1,2}+a_{3,1}+a_{2,2}+a_{3,4}, \\
&\text{ where } a_{3,4}=a_{3,0}+a_{2,1}+a_{1,2}+a_{3,1}+a_{2,2}.
\end{align*}
We need to download 10 bits to recover column 1, i.e., the repair bandwidth of column 1 is roughly $3/4$ of all $12$ information bits.

Next, we present the transformation for general parameters $k$ and $p$ of EVENODD codes with $r=2$ and $d=k+1$. Each column of the transformed EVENODD codes has $2(p-1)$ bits. The transformed EVENODD codes have optimal repair bandwidth for each parity column and the repair bandwidth of each information column is roughly $3/4$ of all the information bits.

Create two instances of EVENODD codes $a_0(x)$, $a_1(x),\ldots,a_{k+1}(x)$ and $b_0(x),b_1(x),\ldots,b_{k+1}(x)$, where $a_j(x)=\sum_{i=0}^{p-2}a_{i,j}x^i$ and $b_j(x)=\sum_{i=0}^{p-2}b_{i,j}x^i$. The information polynomials are $a_0(x),a_1(x),\ldots,a_{k-1}(x)$ and $b_0(x),b_1(x),\ldots,b_{k-1}(x)$, and the parity polynomials are computed by
\begin{align*}
&\begin{bmatrix}
a_k(x) & a_{k+1}(x)\\
b_k(x) & b_{k+1}(x)
\end{bmatrix}\\
=&\begin{bmatrix}
a_0(x) &a_1(x) & \cdots & a_{k-1}(x)\\
b_0(x) &b_1(x) & \cdots & b_{k-1}(x)
\end{bmatrix}
\begin{bmatrix}
1 &1 \\
1 &x \\
\vdots & \vdots\\
1 &x^{k-1}
\end{bmatrix}.
\end{align*}
Let $\mathbf{a_j}=[a_{0,j},a_{1,j},\ldots,a_{p-2,j}]^T$ and $\mathbf{b_j}=[b_{0,j},b_{1,j},\ldots,b_{p-2,j}]^T$ be the coefficients of polynomials $a_j(x)$ and $b_j(x)$, respectively, where $j=0,1,\ldots,k+1$. Given a column vector $\mathbf{a_0}$, we define
\begin{align*}
\mathbf{a_0^*}=&[a_{1,0},a_{0,0},a_{3,0},a_{2,0},\ldots,a_{p-2,0},a_{p-3,0}]^T,\\
\mathbf{\bar{a}_0}=&[a_{0,0},0,a_{2,0},0,\ldots,a_{p-3,0},0]^T.
\end{align*}
The summation of two column vectors $\mathbf{a_0},\mathbf{a_1}$ is define by
\[
\mathbf{a_0}\oplus\mathbf{a_1}=[a_{0,0}+a_{0,1},a_{1,0}+a_{1,1},\ldots,a_{p-2,0}+a_{p-2,1}]^T.
\]
For example, when $p=5$, we have
\[
\mathbf{a_0}\oplus\mathbf{a_1}=[a_{0,0}+a_{0,1},a_{1,0}+a_{1,1},a_{2,0}+a_{2,1},a_{3,0}+a_{3,1}]^T,
\]
and
\begin{align*}
\mathbf{a_0^*}=&[a_{1,0},a_{0,0},a_{3,0},a_{2,0}]^T,\\
\mathbf{\bar{a}_0}=&[a_{0,0},0,a_{2,0},0]^T.
\end{align*}
For $j=0,1,\ldots,k-1$, column $j$ stores $2(p-1)$ information bits $\mathbf{a_j},\mathbf{b_j}$. The first parity column stores $2(p-1)$ parity bits
\begin{align*}
&\mathbf{a_k}\oplus\mathbf{b_k}=[a_{0,k}+b_{0,k},a_{1,k}+b_{1,k},\ldots,a_{p-2,k}+b_{p-2,k}]^T,\\
&\mathbf{b_{k+1}}=[b_{0,k+1},b_{1,k+1},\ldots,b_{p-2,k+1}]^T,
\end{align*}
and the second parity column stores
$2(p-1)$ parity bits
\begin{align*}
\mathbf{a_k}\oplus\mathbf{\bar{b}_k}\oplus\mathbf{b_k^*}=&[a_{0,k}+b_{0,k}+b_{1,k},a_{1,k}+b_{0,k},\ldots,\\
& a_{p-3,k}+b_{p-3,k}+b_{p-2,k},a_{p-2,k}+b_{p-3,k}]^T,\\
\mathbf{a_{k+1}}=&[a_{0,k+1},a_{1,k+1},\ldots,a_{p-2,k+1}]^T.
\end{align*}

We show that the transformed EVENODD codes satisfy MDS property, i.e., we can retrieve all $2k(p-1)$ information bits from any $k$ columns. Consider the $k$ columns from columns 2 to $k+1$. First, we can compute $(p-1)/2$ bits
$b_{i,k}$
by $(a_{i-1,k}+b_{i-1,k})+(a_{i-1,k}+b_{i-1,k}+b_{i,k})$ for $i=1,3,\ldots,p-2$ and compute $a_{i,k}$ by $b_{i,k}+(a_{i,k}+b_{i,k})$ for $i=1,3,\ldots,p-2$. Then, we can compute $b_{i,k}$ by $a_{i+1,k}+(a_{i+1,k}+b_{i,k})$ for $i=0,2,\ldots,p-3$ and compute $a_{i,k}$ by $b_{i,k}+(a_{i,k}+b_{i,k})$ for $i=0,2,\ldots,p-3$.
Finally, we can obtain the information bits $b_{0,0},b_{1,0},\ldots,b_{p-2,0}$ and $b_{0,1},b_{1,1},\ldots,b_{p-2,1}$ from $b_{0,j},b_{1,j},\ldots,b_{p-2,j}$ for $j=2,3,\ldots,k+1$, as the EVENODD code is MDS code. The information bits $a_{0,0},a_{1,0},\ldots,a_{p-2,0}$ and $a_{0,1},a_{1,1},\ldots,a_{p-2,1}$ can be computed similarly. We can also retrieve all information bits from any $k-1$ information columns and any one parity column. Consider the $k$ columns from column 1 to $k$. We can obtain $k(p-1)$ bits $b_{0,j},b_{1,j},\ldots,b_{p-2,j}$ for $j=1,2,\ldots,k$ from column 1 to $k$, and compute the information bits $b_{0,0},b_{1,0},\ldots,b_{p-2,0}$. Then, we can compute $b_{i,k}$ from the information bits $b_{i,0},b_{i,1},\ldots,b_{i,k-1}$, and compute $a_{i,k}$ by $a_{i,k}=b_{i,k}+(a_{i,k}+b_{i,k})$ for $i=0,1,\ldots,p-2$. Together with $(k-1)(p-1)$ bits $a_{0,j},a_{1,j},\ldots,a_{p-2,j}$ with $j=1,2,\ldots,k$ from column 1 to $k$, we can compute $p-1$ information bits $a_{0,0},a_{1,0},\ldots,a_{p-2,0}$. The decoding method from any $k-1$ information columns plus any one parity column is similar.

Each parity column of the transformed EVENODD codes has optimal repair access. We can repair column $k$ by downloading $\mathbf{b_j}$ from column $j$ for $j=0,1,\ldots,k-1$ and $\mathbf{a_k}\oplus\mathbf{\bar{b}_k}\oplus\mathbf{b_k^*}$ from column $k+1$. Specifically, we can compute $\mathbf{b_k},\mathbf{b_{k+1}}$ from $\mathbf{b_0},\mathbf{b_1},\ldots,\mathbf{b_{k-1}}$, and $\mathbf{a_k}\oplus\mathbf{b_k}$ by $(\mathbf{a_k}\oplus\mathbf{\bar{b}_k}\oplus\mathbf{b_k^*})\oplus\mathbf{b_k^*}\oplus\mathbf{\bar{b}_k}\oplus\mathbf{b_k}$. Similarly, we can repair column $k+1$ by downloading $\mathbf{a_j}$ from column $j$ for $j=0,1,\ldots,k-1$ and $\mathbf{a_k}\oplus\mathbf{b_k}$ from column $k$. In the next theorem, we show that the efficient repair property of any single information column of EVENODD codes is preserved in the transformed EVENODD codes.

\begin{theorem}
In the $(k+2,k)$ EVENODD codes, suppose that $p-1$ is a multiple of four and we can download the bits $a_{i,j}$ for all $i\in S_j$ and $j=0,1,\ldots,f-1,f+1,\ldots,k+1$ to recover column $f$, where $0\leq f\leq k-1$, $S_j$ denotes the set of indices of the downloaded bits from column $j$ and $S_k=\{0,1,\ldots,(p-1)/2-1\}$.
Then, column $f$ of the transformed EVENODD codes can be recovered by downloading $a_{i,j},b_{i,j}$ for all $i\in S_j$ from column $j$ for $j=0,1,\ldots,f-1,f+1,\ldots,k-1$, $a_{i,k}+b_{i,k}$ for all $i\in S_k=\{0,1,\ldots,(p-1)/2-1\}$ and $b_{i,k+1}$ for all $i\in S_{k+1}$ from column $k$, $a_{i,k}+b_{i,k}+b_{i+1,k}$ and $a_{i+1,k}+b_{i,k}$ for all $i\in \{0,2,\ldots,(p-1)/2-2\}$ and $a_{i,k+1}$ for all $i\in S_{k+1}$ from column $k+1$.
\label{thm:evenodd-rep1}
\end{theorem}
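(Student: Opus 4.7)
The plan is to reduce the repair of column $f$ in the transformed code to two independent invocations of the original EVENODD repair scheme, one on the $\mathbf{a}$-layer and one on the $\mathbf{b}$-layer. Since columns $0,1,\ldots,k-1$ are simply the plain concatenation of the two EVENODD codewords $(\mathbf{a_j},\mathbf{b_j})$, no mixing occurs on the information side, and all interaction between the layers is confined to the two parity columns. Thus the task reduces to showing that the bits downloaded from columns $k$ and $k+1$ allow us to recover $a_{i,k},b_{i,k}$ for every $i\in S_k$ and $a_{i,k+1},b_{i,k+1}$ for every $i\in S_{k+1}$; once these parity-side values are known, the hypothesized EVENODD repair scheme, applied independently in each layer, produces $\mathbf{a_f}$ and $\mathbf{b_f}$.

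The values $b_{i,k+1}$ for $i\in S_{k+1}$ are read off directly from the second block of column $k$, which stores $\mathbf{b_{k+1}}$, and the values $a_{i,k+1}$ for $i\in S_{k+1}$ are read off directly from the second block of column $k+1$, which stores $\mathbf{a_{k+1}}$. The only nontrivial step is extracting $a_{i,k},b_{i,k}$ for $i\in S_k=\{0,1,\ldots,(p-1)/2-1\}$. For each even index $i\in\{0,2,\ldots,(p-1)/2-2\}$ the four downloaded bits
\[
a_{i,k}+b_{i,k},\quad a_{i+1,k}+b_{i+1,k},\quad a_{i,k}+b_{i,k}+b_{i+1,k},\quad a_{i+1,k}+b_{i,k}
\]
involve only the four unknowns $a_{i,k},b_{i,k},a_{i+1,k},b_{i+1,k}$. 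Adding the first and third yields $b_{i+1,k}$; the second then yields $a_{i+1,k}$; the fourth then yields $b_{i,k}$; and the first then yields $a_{i,k}$. The hypothesis that $p-1$ is a multiple of four makes $(p-1)/2$ even, so the pairs $(i,i+1)$ partition $S_k$ and every $a_{i,k},b_{i,k}$ with $i\in S_k$ is recovered.

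With the parity-side bits in hand, the $\mathbf{a}$-layer forms an ordinary $(k+2,k)$ EVENODD codeword in which the bits available at columns $0,\ldots,f-1,f+1,\ldots,k+1$ are exactly $\{a_{i,j}:i\in S_j\}$; by the hypothesis this set suffices to repair column $f$ of EVENODD, so $\mathbf{a_f}$ is recovered. The identical argument applied to the $\mathbf{b}$-layer recovers $\mathbf{b_f}$. The main obstacle is the extraction step in the preceding paragraph: the specific design of column $k+1$ as $\mathbf{a_k}\oplus\mathbf{\bar{b}_k}\oplus\mathbf{b_k^*}$ is tailored precisely so that the four equations at each pair $(i,i+1)$ form an invertible $\mathbb{F}_2$-linear system in the four unknowns $a_{i,k},b_{i,k},a_{i+1,k},b_{i+1,k}$, which is what permits disentangling the two layers while downloading only the $S_k$-indexed bits of the second parity column and thereby preserving, in each layer, the per-information-column repair complexity of the underlying EVENODD code.
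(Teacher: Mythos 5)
Your proposal is correct and follows essentially the same route as the paper: read $b_{i,k+1}$ and $a_{i,k+1}$ directly, disentangle $a_{i,k},b_{i,k}$ for $i\in S_k$ from the four mixed bits at each pair $(i,i+1)$ (your pairwise elimination is exactly the paper's odd-then-even sweep), and then invoke the hypothesized EVENODD repair scheme independently on the $\mathbf{a}$- and $\mathbf{b}$-layers. Your explicit remark that $p-1\equiv 0\pmod 4$ makes the pairs partition $S_k$ is a point the paper uses only implicitly.
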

\begin{proof}
Consider the repair of column $f$ for the transformed EVENODD codes. We have received the following bits
\begin{small}
\begin{align*}
\begin{bmatrix}
\text{Column } 0  & a_{i,0} \text{ } \forall i\in S_0 \text{ and } b_{i,0} \text{ } \forall i\in S_0 \\ \hline
\cdots & \cdots  \\ \hline
\text{Column } k-1 & a_{i,k-1} \text{ } \forall i\in S_{k-1}  \text{ and } b_{i,k-1} \text{ } \forall i\in S_{k-1} \\ \hline
\text{Column } k &  a_{i,k}+b_{i,k} \text{ } \forall i\in S_{k} \text{ and } b_{i,k+1} \text{ } \forall i\in S_{k+1} \\ \hline
\text{Column }  & a_{i,k}+b_{i,k}+b_{i+1,k},a_{i+1,k}+b_{i,k} \text{ }   \\
 k+1  &  \forall i\in \{0,2,\ldots,\frac{p-5}{2}\} \text{ and }  a_{i,k+1} \text{ } \forall i\in S_{k+1} \\
\end{bmatrix}.
\end{align*}
\end{small}
We can calculate $b_{i,k}$ by $(a_{i-1,k}+b_{i-1,k})+(a_{i-1,k}+b_{i-1,k}+b_{i,k})$ for $i=1,3,\ldots,(p-1)/2-1$, and $a_{i,k}$ by $b_{i,k}+(a_{i,k}+b_{i,k})$ for $i=1,3,\ldots,(p-1)/2-1$. Then, we can compute $b_{i,k}$ by $a_{i+1,k}+(a_{i+1,k}+b_{i,k})$ for $i=0,2,\ldots,(p-5)/2$ and $a_{i,k}$ by $b_{i,k}+(a_{i,k}+b_{i,k})$ for $i=0,2,\ldots,(p-5)/2$.
We thus obtain $a_{i,k}$ and $b_{i,k}$ for all $i\in S_k=\{0,1,\ldots,(p-1)/2-1\}$. Recall that we can recover $a_{0,f},a_{1,f},\ldots,a_{p-2,f}$ by downloading the bits $a_{i,j}$ for all $i\in S_j$ and $j=0,1,\ldots,f-1,f+1,\ldots,k+1$.
Therefore, we obtain the bits $a_{i,j},b_{i,j}$ for all $i\in S_j$ and $j=0,1,\ldots,f-,f+1,\ldots,k+1$,
and the bits $a_{0,f},a_{1,f},\ldots,a_{p-2,f}$ and $b_{0,f},b_{1,f},\ldots,b_{p-2,f}$ in column $f$ of the transformed EVENODD codes can be recovered.
\end{proof}
By Theorem \ref{thm:evenodd-rep1}, the efficient repair property of any information column of EVENODD codes with $r=2$ is preserved after the transformation, if $p-1$ is a multiple of four. When $r\geq 3$, the repair method of information column of EVENODD codes is different from that of EVENODD codes with $r=2$. We need to design new transformation carefully to preserve the efficient repair property of information column and that will be our future work.


\begin{table*}
\caption{The transformed EVENODD code with
$k=3$, $r=2$ and $p=5$.}
\vspace{-8pt}
\begin{center}
\begin{tabular}{|c|c|c|c|c|}
\hline
Column 0 & Column 1  & Column 2  & Column 3 & Column 4 \\
\hline
$a_{0,0},b_{0,0}$& $a_{0,1},b_{0,1}$ & $a_{0,2},b_{0,2}$& $a_{0,3}+b_{0,3},b_{0,4}$ & $a_{0,4},a_{0,3}+b_{0,3}+b_{1,3}$ \\
\hline
$a_{1,0},b_{1,0}$& $a_{1,1},b_{1,1}$ & $a_{1,2},b_{1,2}$& $a_{1,3}+b_{1,3},b_{1,4}$ & $a_{1,4},a_{1,3}+b_{0,3}$ \\
\hline
$a_{2,0},b_{2,0}$& $a_{2,1},b_{2,1}$ & $a_{2,2},b_{2,2}$& $a_{2,3}+b_{2,3},b_{2,4}$ & $a_{2,4},a_{2,3}+b_{2,3}+b_{3,3}$ \\
\hline
$a_{3,0},b_{3,0}$& $a_{3,1},b_{3,1}$ & $a_{3,2},b_{3,2}$& $a_{3,3}+b_{3,3},b_{3,4}$ & $a_{3,4},a_{3,3}+b_{2,3}$ \\
\hline
\end{tabular}
\end{center}
\label{table:evenodd2}
\end{table*}

Table \ref{table:evenodd2} shows an example of the transformed code with $k=3$, $r=2$ and $p=5$.
When $f=1$, we have $S_0=\{0,1,3\}$, $S_2=\{0,1,2\}$, $S_3=\{0,1\}$ and $S_4=\{1,3\}$ according to the repair method of the EVENODD code in Table \ref{table:evenodd1}. According to Theorem~\ref{thm:evenodd-rep1}, we can recover column 1 of the transformed EVENODD code by downloading the following 20 bits.
\begin{small}
\begin{align*}
&a_{0,0},a_{1,0},a_{3,0},a_{0,2},a_{1,2},a_{2,2},
b_{0,0},b_{1,0},b_{3,0},b_{0,2},b_{1,2},b_{2,2},a_{3,4},\\
&a_{0,3}+b_{0,3},a_{1,3}+b_{1,3},b_{1,4},b_{3,4},
a_{0,3}+b_{0,3}+b_{1,3},a_{1,3}+b_{0,3},a_{1,4}.
\end{align*}
\end{small}
Specifically, we can repair the bits $a_{0,1},a_{1,1}$ and $b_{0,1},b_{1,1}$ by
\begin{align*}
a_{0,1}=& a_{0,0}+a_{0,2}+ (a_{1,3}+b_{1,3})+(a_{1,3}+b_{0,3})+\\
&(a_{0,3}+b_{0,3}+b_{1,3}),\\
a_{1,1}=&a_{1,0}+a_{1,2}+(a_{0,3}+b_{0,3})+(a_{1,3}+b_{1,3})+\\
&(a_{0,3}+b_{0,3}+b_{1,3}),\\
b_{0,1}=&b_{0,0}+b_{0,2}+(a_{0,3}+b_{0,3})+(a_{1,3}+b_{1,3})+\\
&(a_{0,3}+b_{0,3}+b_{1,3})+(a_{1,3}+b_{0,3}),\\
b_{1,1}=& b_{1,0}+b_{1,2}+(a_{0,3}+b_{0,3})+(a_{0,3}+b_{0,3}+b_{1,3}),
\end{align*}
and repair $a_{2,1},a_{3,1},b_{2,1},b_{3,1}$ by
\begin{align*}
a_{3,1}=& a_{1,0}+a_{0,1}+a_{2,2}+a_{1,4},\\
a_{2,1}=& a_{3,0}+a_{1,2}+a_{3,1}+a_{2,2}+a_{3,4},\\
b_{3,1}=& b_{1,0}+b_{0,1}+b_{2,2}+b_{1,4},\\
b_{2,1}=& b_{3,0}+b_{1,2}+b_{3,1}+b_{2,2}+b_{3,4}.
\end{align*}
Therefore, we can recover column 1 by downloading 20 bits and the efficient repair property of column 1 is preserved in our transformation. We can also show that the efficient repair property of any other information column is preserved similarly.

We can also show that any one parity column of the transformed code is optimal. We can repair column 3 by downloading 16 bits
\begin{align*}
&b_{i,j} \text{ for } i=0,1,2,3 \text{ and } j=0,1,2, \text{ and } \\ &a_{0,3}+b_{0,3}+b_{1,3},a_{1,3}+b_{0,3},a_{2,3}+b_{2,3}+b_{3,3},a_{3,3}+b_{2,3}.
\end{align*}
Specifically, we can compute $b_{i,3},b_{i,4}$ from $b_{i,0},b_{i,1},b_{i,2}$ for $i=0,1,2,3$, as EVENODD is MDS code. Then, we can compute the other four bits in column 3 by
\begin{align*}
a_{0,3}+b_{0,3}=&(a_{0,3}+b_{0,3}+b_{1,3})+b_{1,3},\\
a_{1,3}+b_{1,3}=&(a_{1,3}+b_{0,3})+b_{0,3}+b_{1,3},\\
a_{2,3}+b_{2,3}=&(a_{2,3}+b_{2,3}+b_{3,3})+b_{3,3},\\
a_{3,3}+b_{3,3}=&(a_{3,3}+b_{2,3})+b_{2,3}+b_{3,3}.
\end{align*}
Therefore, the repair access of the first parity column is optimal. Similarly, we can repair column 4 by downloading 16 bits
\begin{align*}
&a_{i,j} \text{ for } i=0,1,2,3 \text{ and } j=0,1,2, \text{ and } \\ &b_{0,3}+a_{0,3},b_{1,3}+a_{1,3},b_{2,3}+a_{2,3},b_{3,3}+a_{3,3},
\end{align*}
and the repair access of column 4 is optimal.

In order to obtain binary MDS array codes with low sub-packetization that have efficient repair for any column, we show in this section how to apply the transformation for the array codes in \cite{hou2018c} and EVENODD codes. Note that the transformation given in this section can be viewed as a variant of the transformation in Section \ref{sec:trans}. We can also apply the transformation given in this section multiple times for EVENODD codes to obtain the multi-layer transformed EVENODD codes that have optimal repair for any column, as the construction in Section \ref{sec:cons}. The difference between two transformations is that, the efficient repair property of any information column of codes in \cite{hou2018c} is maintained with the transformation given in this section, while not for the transformation in Section \ref{sec:trans}.
The relationship of sub-packetization and repair bandwidth of binary MDS array codes is one of our future work.

\section{Discussion and Conclusion}
\label{sec:discussions}
In this paper, we propose a generic transformation for EVENODD codes that can enable optimal repair access for the chosen $d-k+1$ columns. Based on the proposed EVENODD transformation, we present the multi-layer transformed $\mathsf{EVENODD}_{\lceil
\frac{k}{d-k+1}\rceil+\lceil \frac{r}{d-k+1}\rceil}$ that have optimal repair access for all $k+r$ columns. In $\mathsf{EVENODD}_{\lceil
\frac{k}{d-k+1}\rceil+\lceil \frac{r}{d-k+1}\rceil}$, the $d$ helper columns can be selected from $k+1$ and $k+r-1$, and some of the $d$ helper columns should be specifically selected.
Moreover, we show that the proposed transformation can also be employed in other existing binary MDS array codes, such as codes in \cite{blomer1999,feng2005,hou2018a,schindelhauer2013,hou2017,hou2018c,hou2018b},
that can enable optimal repair access.
How to combine the existing binary MDS array codes with asymptotically optimal repair access by our transformation to obtain the transformed binary MDS array codes with asymptotically optimal repair access for all columns and lower sub-packetization is an interesting and practical future work. The implementation of the proposed transformed binary MDS array codes in practical storage systems is another one of our future works.

\ifCLASSOPTIONcaptionsoff
  \newpage
\fi

\bibliographystyle{IEEEtran}

\end{document}